\documentclass[journal,twocolumn]{IEEEtran}
%\documentclass[journal,onecolumn,draftcls,12pt]{IEEEtran}
% add ORCID
\usepackage{tikz,xcolor,hyperref}

\definecolor{lime}{HTML}{A6CE39}
\DeclareRobustCommand{\orcidicon}{
	\begin{tikzpicture}
		\draw[lime, fill=lime] (0,0)
		circle[radius=0.16]
		node[white]{{\fontfamily{qag}\selectfont \tiny \.{I}D}}; 
	\end{tikzpicture}
	\hspace{-2mm}
}
\foreach \x in {A, ..., Z}{%
	\expandafter\xdef\csname orcid\x\endcsname{\noexpand\href{https://orcid.org/\csname orcidauthor\x\endcsname}{\noexpand\orcidicon}}
}

\usepackage{amsmath,amsfonts}
\usepackage{algorithmic}
\usepackage{algorithm}
\usepackage{array}
\usepackage[caption=false,font=normalsize,labelfont=sf,textfont=sf]{subfig}
\usepackage{textcomp}
\usepackage{stfloats}
\usepackage{url}
\usepackage{verbatim}
\usepackage{bm}
\usepackage{amsthm}
\usepackage{subfloat}
\usepackage{amsthm}
\usepackage{booktabs}
\usepackage{amssymb}
\newtheorem{theorem}{Theorem}

\usepackage{graphicx}
\usepackage{cite}
\usepackage{algorithm, algorithmic}

\hyphenation{op-tical net-works semi-conduc-tor IEEE-Xplore}
% updated with editorial comments 8/9/2021
\usepackage{hyperref}
\hypersetup{colorlinks,
	linkcolor=red,
	citecolor=blue}
\begin{document}

\title{Precoder Design for User-Centric Network Massive
	MIMO: A Symplectic Optimization Approach}

\author{
	Pengxu~Lin\hspace{-1.5mm}\orcidA{},~\IEEEmembership{Student Member,~IEEE,}
	An-An~Lu\hspace{-1.5mm}\orcidB{},~\IEEEmembership{Member,~IEEE,} 
	and Xiqi~Gao\hspace{-1.5mm}\orcidC{},~\IEEEmembership{Fellow,~IEEE}
	
	\thanks{This work was supported by the National Natural Science Foundation of China under Grants 62394294, 62371125 and
62394290, the Jiangsu Province Major Science and Technology Project under Grant BG2024005, the Fundamental Research
Funds for the Central Universities under Grant 2242022k60007, the Key R$\&$D Plan of Jiangsu Province under Grant BE2022067,
and the Huawei Cooperation Project.
An earlier version of this paper will be presented in part at the IEEE Vehicular Technology Conference (VTC) 2025.}
	
	\thanks{Pengxu Lin, An-An Lu, and Xiqi Gao are with the National Mobile Communications Research Laboratory, Southeast University, Nanjing 210096, China, and also with the Purple Mountain Laboratories, Nanjing 211100, China (e-mail: 220230819@seu.edu.cn, aalu@seu.edu.cn, xqgao@seu.edu.cn).}}
%
% The paper headers
%\markboth{Journal of \LaTeX\ Class Files,~Vol.~14, No.~8, August~2021}%
%{Shell \MakeLowercase{\textit{et al.}}: A Sample Article Using IEEEtran.cls for IEEE Journals}
%
%\IEEEpubid{0000--0000/00\$00.00~\copyright~2021 IEEE}
%% Remember, if you use this you must call \IEEEpubidadjcol in the second
% column for its text to clear the IEEEpubid mark.

\maketitle
\begin{abstract}
In this paper, we utilize symplectic optimization to design a precoder for user-centric network (UCN) massive multiple-input multiple-output (MIMO) systems, where a subset of base stations (BSs) serves each user terminal (UT) instead of using all BSs. In UCN massive MIMO systems, the dimension of the precoders is reduced compared to conventional network massive MIMO. It simplifies the implementation of precoders in practical systems. However, the matrix inversion in traditional linear precoders still requires high computational complexity. To avoid the matrix inversion, we employ the symplectic optimization framework, where optimization problems are solved based on dissipative Hamiltonian dynamical systems. To better fit symplectic optimization, we transform the received model into the real field and reformulate the weighted sum-rate (WSR) maximization problem. The objective function of the optimization problem is viewed as the potential energy of the dynamical system.  
Due to energy dissipation, the continuous dynamical system always converges to a state with minimal potential energy.
By discretizing the continuous system while preserving the symplectic structure, we obtain an iterative method for the precoder design.
The complexity analysis of the proposed symplectic method is also provided to show its high computational efficiency.
Simulation results demonstrate that the proposed precoder design based on symplectic optimization outperforms the weighted minimum mean-square error (WMMSE) precoder in the UCN massive MIMO system.
\end{abstract}

\begin{IEEEkeywords}
	Symplectic optimization, precoder design, user-centric, network massive MIMO.
\end{IEEEkeywords}

\section{Introduction} 
\IEEEPARstart{M}{assive} multiple-input multiple-output (MIMO) has played an essential role in the fifth-generation (5G) wireless communications and receives increasing attention for the sixth-generation (6G) wireless communications\cite{jiang2021road,wang2023road}. 
Massive MIMO has been recognized as a key technique for significantly enhancing spectral and energy efficiency by equipping large-scale antenna arrays at the base stations (BSs)\cite{lu2014overview,larsson2014massive}.
Cellular massive MIMO systems are the most widely adopted paradigm \cite{huang2011mimo}. The cellular system contains many cellular cells, and each cell contains a macro BS, which simultaneously serves many user terminals (UTs) in the cell on the same frequency resource.
The advantage of cellular massive MIMO systems in offering significant improvements in both spectral and energy efficiency has been verified by various works \cite{you2022energy,1678166,5595728}.
Nonetheless, cell-edge UTs experience significantly performance degradation because the received signal from its serving BS is weak and the inter-cell interference from neighboring BSs is strong \cite{rahman2010enhancing}.
Moreover, the system complexity becomes a bottleneck as the number of antennas at the BS increases. 

To improve the rate performance of cell-edge UTs, the cell-free massive MIMO system \cite{elhoushy2021cell,zhang2019cell}, in which a great amount of distributed access points (APs) coordinate transmission by connecting to a central processing unit (CPU) \cite{obakhena2021application}, is proposed.
Each UT in the cell-free massive MIMO system is served by all the APs and hence the notion of cell-edge disappears.
In addition, APs are equipped with far fewer antennas and are deployed more densely throughout the network.
However, as the number of access points (APs) and UTs increases, it becomes increasingly complex and costly to implement\cite{interdonato2020local}.
At the same time, network massive MIMO has also been investigated to improve the performance of UTs in the edge of a cell through joint transmission\cite{venkatesan2007network}.
In a network massive MIMO system, multiple BSs having a large antenna array share their channel state information (CSI). 
Moreover, by serving each UT with all BSs, it enhances the performance of UTs, and also reduces the possibility unnecessary handovers and link outages\cite{lee2012network}.
A smooth migration from cellular to network MIMO system is feasible while maintaining service continuity. However, the complexity is also an issue for network massive MIMO.

To reduce the complexity, the user-centric rule has been introduced to the cell-free and network massive MIMO systems \cite{ammar2021user,demir2021foundations,sun2025precoder}.
Specifically, each UT is served by a few BSs with the best channel conditions \cite{sun2025precoder}.
It reduces the dimension of the precoder and the system complexity compared to conventional network massive MIMO systems. 
Moreover, it also improves the performance of cell-edge UTs \cite{chen2023user}.
In this paper, we consider the user-centric network (UCN) massive MIMO system as in \cite{sun2025precoder}.
To obtain a satisfied performance for UCN massive MIMO systems, it is important to find an efficient precoder design approach to mitigate the inter-user interference \cite{albreem2021overview}. 

In traditional cellular massive MIMO system, various kinds of precoders have been proposed. 
Nonlinear precoders, such as the dirty paper coding (DPC) \cite{lee2007high}, achieve the capacity region, but with prohibitive complexity. 
The linear the signal-to-leakage-plus-noise ratio (SLNR)\cite{sadek2007leakage} and regularized zero forcing (RZF) \cite{peel2005vector}  precoders are well known for their implementation simplicity.
However, only sub-optimal performance is achieved since it does not directly maximize the sum-rate.
The weighted minimum mean-squared error (WMMSE) precoder \cite{christensen2008weighted, shi2011iteratively} solves the weighted sum-rate (WSR) maximization problem and thus outperforms the RZF precoder.
Nevertheless, it involves the matrix inversion, which leads to intensive computation when applied to the UCN massive MIMO system since there are large number of users and transmit antennas\cite{shi2023robust}. 
The introduction of the UCN effectively reduces the computational complexity, but the complexity is still high as the dimension of the matrix inversion is still large \cite{sun2025precoder}.
To avoid the matrix inversion, the gradient descent method can be used to solve the WSR problem directly, but it converges slowly. 
To improve the convergence behavior, the accelerated gradient method can be employed\cite{ji2009accelerated}. 
However, the accelerated gradient method has not emerged a theoretical framework until the appearance of symplectic optimization \cite{betancourt2018symplectic}. 

Symplectic optimization solves an optimization problem with a dissipative Hamiltonian dynamical system.  The objective function of the optimization problem is viewed as the potential energy in the dynamical system \cite{betancourt2018symplectic} and a kinetic energy term is added. 
It reaches the minimal value of the potential function when the energy is dissipated, and is expected to achieve better convergence performance \cite{ghirardelli2023optimization}. 
Symplectic optimization has the potential to run away from local optimal points and achieve fast convergence.
It is first introduced to wireless communications in \cite{zhang2025cross} to solve an unconstrained quadratic sub-problem in single cell massive MIMO to reduce complexity.
In \cite{zhang2025symplecticoptimizationcrosssubcarrier}, symplectic optimization is used to design cross sub-carrier precoder with channel smoothing in the single cell massive MIMO system, achieving improved performance and reduced computational complexity.
%Although the user-centric rule reduces the system dimensionality, the complexity remains high when the number of users and transmit antennas is large.
Symplectic optimization has better performance when applied to large-scale data analysis \cite{jordan2018dynamical}.
%Therefore, it provides an efficient way for the precoder design in UCN massive MIMO system.
%The introduction of symplectic optimization is expected to achieve a low-complexity and high-performance transmission method.\par

% Reducing complexity is an urgent problem for extra-large-scale MIMO. The symplectic optimization has better performance when applied to large-scale data analysis\cite{jordan2018dynamical}. 

In this paper, we employ symplectic optimization to propose the precoder design for the UCN massive MIMO system. 
The proposed precoder has also been extended to robust precoder design to enhance the sum-rate performance in high-mobility scenarios\cite{Lin2025}.
We formulate the WSR maximization problem with a set of power constraints due to the limited transmit power of each BS. 
Meanwhile, we transform the received model into the real field to better fit the symplectic optimization method.
Symplectic optimization is applied directly to solve the constrained WSR maximization problem, and  an efficient numerical scheme is proposed.
The large dimensional matrix inversion is avoided in the symplectic optimization method. 
To demonstrate the high efficiency of the proposed symplectic optimization method, its computational complexity is analyzed.
Simulation results show that the symplectic optimization precoder design method achieves satisfied performance in UCN massive MIMO systems.

The main contribution of this work is provided as follows.

1. We introduce the advanced symplectic optimization for the precoder design in the UCN massive MIMO system. 

2. We derive an iterative symplectic method for the precoder design. The proposed method converges faster than the accelerated gradient method, and outperforms the WMMSE precoder with lower complexity. 

The rest of this paper is organized as follows.
In Section \ref{section2}, the system model and problem formulation are provided.
The symplectic optimization approach for precoder design is provided in Section \ref{section3}.
Simulation results are presented in Section \ref{section4}.
Section \ref{section5} draws the conclusion. Proofs of the theorems are given in Appendices. 

\textit{Notations}: Boldface lowercase and uppercase letters represent the column vectors and matrices, respectively. 
The superscripts $(\cdot)^*$, $(\cdot)^T$, and $(\cdot)^H$ denote the conjugate, transpose, and conjugate-transpose, respectively. $\Re\{A\}$ and $\Im\{A\}$ mean the real part and imaginary part of \(A\). $\mathrm{diag}(\mathbf{a})$ represents the diagonal matrix with $\mathbf{a}$ along its main diagonal. Similarly, $\mathbf{D}=\mathrm{vecdiag}\{\mathbf{a}_1,\cdots,\mathbf{a}_L\}$ denotes the vector diagonal matrix with $\mathbf{a}_1,\cdots,\mathbf{a}_L$ on the main diagonal.

\section{System Model and Problem Formulation}
\label{section2}
%In this section, we present the system model of a UCN massive MIMO system, and formulate the WSR-maximization problem for the precoder design in the considered system.

\subsection{System model}
Consider a UCN massive MIMO system consisting of $K$ UTs and $B$ BSs. 
Each BS in the system is equipped with a uniform planar array (UPA) having $M_t=M_v \times M_h$ antennas, and each UT has $M_r$ antennas. 
The system works in the time division duplexing (TDD) mode.
We assume the BSs are synchronized and interconnected through backhaul links. 
Thus, the BSs can perform joint coherent transmission. 
Let $\mathcal{S}_B=\{1,2,\dots, B\}$ denote the set of all BSs, and $\mathcal{S}_U=\{1,2,\dots, K\}$ denote the set of all UTs.
For a specific user \( k \), it is served by the set $\mathcal{B}_{k}=\{k_1,k_2,\dots,k_{B_{k}}\}$, which is a subset of $\mathcal{S}_B$. Similarly, for a specific BS \( l \), the set of users it serves can be denoted as $\mathcal{U}_l=\{1,2,\dots,l_{U_l}\}$. This user-centric rule enables each UT to receive signals from the base station with the best channel quality without relying on the traditional cell concept.
Fig.~\ref{System} illustrates the UCN massive MIMO system, where, for clarity, only three UTs and their corresponding serving clusters are depicted.

\begin{figure}[htbp]
	\centerline{\includegraphics[width=1\columnwidth]{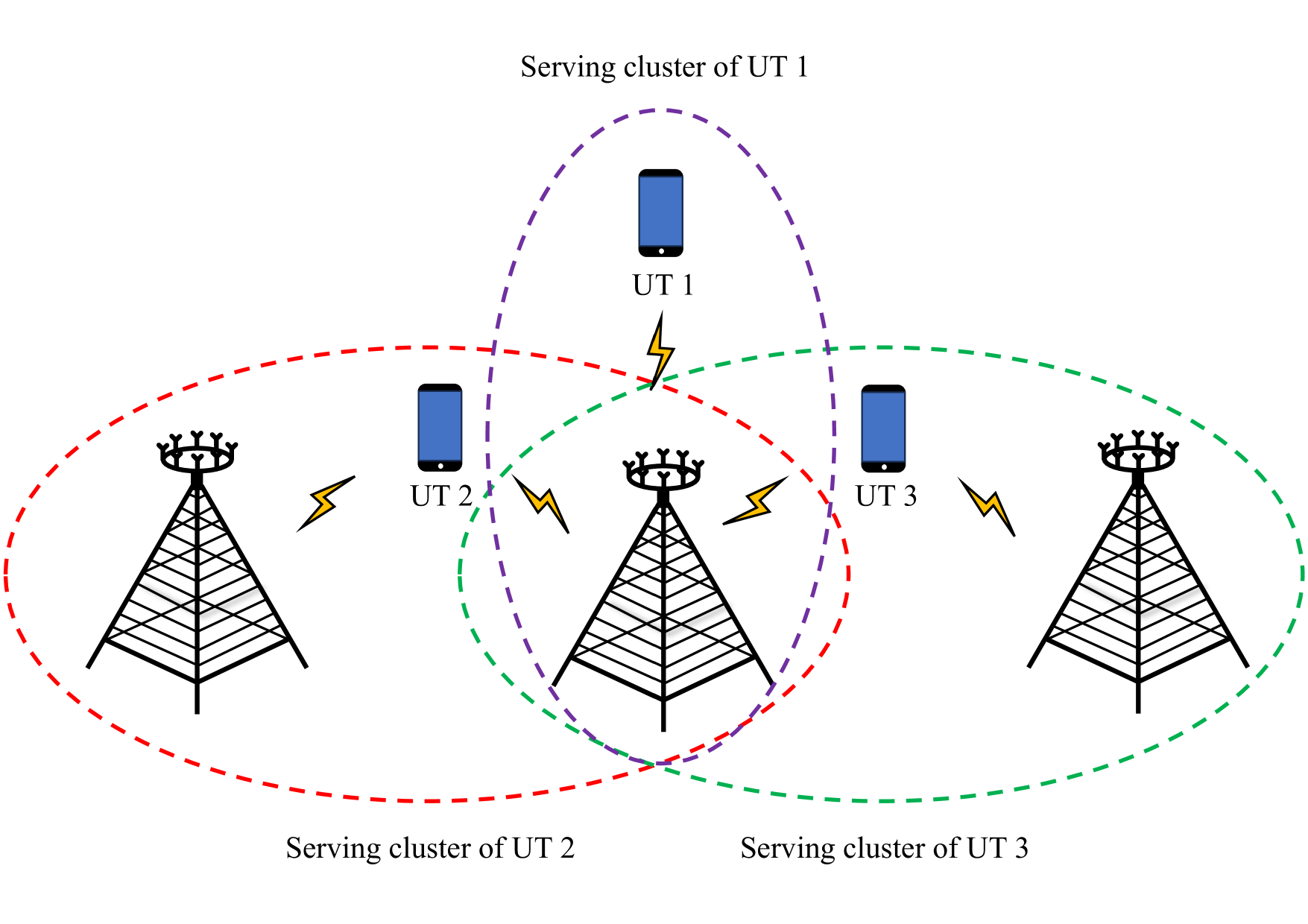}}
	\caption{The UCN massive MIMO system.}
	\label{System}
\end{figure}

Let ${x}_k$ denote the symbol transmitted to the \(k\)-th UT with $\mathbb{E}\{x_kx_k^*\}=1, k\in \mathcal{S}_U$, and $\mathbb{E}\{x_kx_j^*\}=0,j\ne k,j\in \mathcal{S}_U$. The channel vector from the \(l\)-th base station to the \(k\)-th user is denoted as $\mathbf{h}_{l,k}\in\mathbb{C}^{M_t\times M_r}$. Let $\mathbf{p}_{l,k}\in\mathbb{C}^{M_t\times1}$ denote the precoder vector designed for the transmission from BS v\(_l\) to UT \(k\), $l\in \mathcal{B}_k$. The received signal at UT \(k\) is given by
\begin{align}
	y_k=&\sum_{l\in \mathcal{B}_k}\mathbf{h}_{l,k}^H\mathbf{p}_{l,k}x_k+\sum_{l\in \mathcal{B}_k}\sum_{t\in \mathcal{U}_l,t\ne k}\mathbf{h}_{l,k}^H\mathbf{p}_{l,t}x_t \nonumber\\ 
	&+\sum_{m\ne \mathcal{B}_k}\sum_{t\in \mathcal{U}_m}\mathbf{h}_{m,k}^H\mathbf{p}_{m,t}x_t+z_k \label{yk}
\end{align}
where \(z_k\) is the complex Gaussian noise with distribution $\mathcal{CN}(0,\sigma_z^2)$. Let \(z_k'\) denote the interference plus noise of UT \(k\) defined as 
\begin{align}
	z_k'&=\sum_{l\in \mathcal{B}_k}\sum_{t\in \mathcal{U}_l,t\ne k}\mathbf{h}_{l,k}^H\mathbf{p}_{l,t}x_t+\sum_{m\ne \mathcal{B}_k}\sum_{t\in \mathcal{U}_m}\mathbf{h}_{m,k}^H\mathbf{p}_{m,t}x_t+z_k\nonumber\\
	&=\sum_{t\ne k}\sum_{m\in \mathcal{B}_t}\mathbf{h}_{m,k}^H\mathbf{p}_{m,t}x_t+z_k
\end{align}
whose covariance is given by
\begin{align}
	r_k=\sum_{t\ne k}\Big(\sum_{m\in \mathcal{B}_t}\mathbf{h}_{m,k}^H\mathbf{p}_{m,t}\Big)\Big(\sum_{m\in \mathcal{B}_t}\mathbf{h}_{m,k}^H\mathbf{p}_{m,t}\Big)^*+\sigma_z^2. \label{rk1}
\end{align}
The rate of UT \(k\) is easy to obtain as
\begin{align}
	\mathcal{R}_k=\log_2\Big(1+  r_{k}^{-1}
	(\sum_{l\in\mathcal{B}_k}\mathbf{p}_{l,k}^H\mathbf{h}_{l,k})(\sum_{l\in\mathcal{B}_k}\mathbf{h}_{l,k}^H\mathbf{p}_{l,k})\Big).
\end{align}

\subsection{Problem Formulation}

For simplicity, let $\mathbf{p}_{l}=[\mathbf{p}_{l,1}^T,\mathbf{p}_{l,2}^T,\dots,\mathbf{p}_{l,K}^T]^T$ and $\mathbf{p}=[\mathbf{p}_1^T,\mathbf{p}_2^T,\dots,\mathbf{p}_L^T]^T$.

Since each BS has its individual power constraint in the UCN massive MIMO system, we employ the WSR-maximization precoder design method as
\begin{equation}
	\begin{split}
		&\mathrm{arg} \max_{\mathbf{p}} f(\mathbf{p})=\sum_{k\in \mathcal{S}_U}w_k\mathcal{R}_k \\ 
		&\mathrm{s.t.}\sum_{k\in\mathcal{K}}\mathbf{p}_{l,k}^{H}\mathbf{p}_{l,k} \leq\rho_l\quad\forall l\in\mathcal{L}\label{WSR}
	\end{split}
\end{equation}
where $w_k$ is the weight and \(\rho_l\) denotes the power constraint of \(l\)-th BS.

The WSR-problem \eqref{WSR} can be solved directly by the iterative WMMSE method. 
For simplicity, we denote
\begin{align}
	A_{k,t} = \sum_{m\in \mathcal{B}_t}\mathbf{h}_{m,k}^H\mathbf{p}_{m,t}
\end{align} 
and 
\begin{align}
	\check{r}_k&=\sum_{j \in \mathcal{S}_U }\Big(\sum_{m\in \mathcal{B}_t}\mathbf{h}_{m,k}^H\mathbf{p}_{m,t}\Big)\Big(\sum_{m\in \mathcal{B}_t}\mathbf{h}_{m,k}^H\mathbf{p}_{m,t}\Big)^*+\sigma_z^2\nonumber\\
	&=\sum_{j \in \mathcal{S}_U }A_{k,t}A_{k,t}^* +\sigma_z^2.
\end{align}
From the model in \eqref{yk}, the WMMSE precoder is derived as
\begin{align}
	\mathbf{p}_{l,k}^\mathrm{WMMSE} & = (\sum_{j \in \mathcal{S}_U } w_j \mathbf{h}_{l,j} u_j^*\mathrm{W}_ju_j \mathbf{h}_{l,j}^H + \lambda _k\mathbf{I}_{M_t} )^{-1}\nonumber\\
	&(w_k\mathbf{h}_{l,k}u_k^*\mathrm{W}_k-\sum_{j \in \mathcal{S}_U }w_j\mathbf{h}_{l,j}u_j^*\mathrm{W}_ju_j\sum_{m \neq l}\mathbf{h}_{m,j}^H\mathbf{p}_{m,k})  \label{WMMSE}
\end{align}
where \(u_k\) denotes the receive coefficient for UT \(k\) and is expressed as
\begin{align}
	u_k = A_{k,t}^*\check{r}_k^{-1}
\end{align}
and $\mathrm{W}_k$ is the MMSE weight, which can be obtained by
\begin{align}
	\mathrm{W}_k^* = 1 + A_{k,k}^*r_k^{-1}A_{k,k}.
\end{align}
%However, due to the requirement of inverting large dimensional matrices, it is computationally complicated in the massive MIMO system.

The \(\lambda_k\) used in \eqref{WMMSE} denotes the Lagrange multiplier, which can be computed using a bisection method.
As shown in \eqref{WMMSE}, the WMMSE method involves the inversion of an \(M_t\times M_t\) matrix, which results in high computational complexity.
This significantly limits its use in practical UCN massive MIMO system.

%Since there are a large number of UTs in the UCN massive MIMO system, the matrix inversion in the WMMSE makes the computational complexity too high.

To avoid the matrix inversion, conventional gradient descent (GD) methods can be used to solve the WSR problem directly.  Define $g(\hat{\mathbf{p}})=-f(\hat{\mathbf{p}})$.
The conventional update formula is given by
\begin{align}
	\mathbf{p}_{n+1}^{\mathrm{GD}} = \mathbf{p}_{n}^{\mathrm{GD}} - \alpha \nabla g(\mathbf{p}_{n}^{\mathrm{GD}})
\end{align}
where $\nabla g(\mathbf{p}_{n}^{\mathrm{GD}})$ denotes the gradient and \(\alpha\) is the step size.

Due to the slow convergence of the conventional GD method, an accelerated variant known as Nesterov accelerated gradient descent (NAGD) \cite{nesterov1983method} has been widely adopted to enhance the convergence performance\cite{liu2022convergence,sutskever2013importance}.
The NAGD method introduces a momentum term that can significantly speed up the iterative process and improve convergence\cite{nesterov1983method}. 
It has been applied to the precoder design in cell-free-assisted LEO satellite communications with user-centric rule \cite{miao2025serving}. 
The update formula of the NAGD method is given by
\begin{align}
	&\mathbf{q}_n = \mathbf{p}_{n}^{\mathrm{NAGD}} + \mu(\mathbf{p}_{n}^{\mathrm{NAGD}} - \mathbf{p}_{n-1}^{\mathrm{NAGD}})\\
	&\mathbf{p}_{n+1}^{\mathrm{NAGD}} = \mathbf{q}_n - \alpha \nabla g(\mathbf{p}_{n}^{\mathrm{NAGD}})
\end{align}
where \(\mu\) is hyperparameter, \(\alpha\) is the step size, and \(\mathbf{q}\) denotes the momentum. 

In the gradient decent method, how to choose the step size is crucial to ensure convergence. 
Line search methods are widely used strategies in gradient decent algorithms to enhance convergence behavior \cite{nocedal1999numerical}.
The Armijo condition is a criterion commonly used in line search methods to determine whether a candidate step size an produce a sufficient decrease in the objective function \cite{nocedal1999numerical}.
However, the line search method typically requires multiple calculations of the objective function in each iteration to determine a suitable step size, which can significantly increase the computational complexity\cite{sun2006optimization}.

Symplectic optimization is an advanced method proposed in recent years to solve mathematical optimization problems. 
It relates the optimization problem with a dissipative dynamical system, where the potential energy
is the objective function and a kinetic energy term is also included. Due to energy dissipation, the continuous dissipative dynamical system always converges to a minimal potential energy, which is also a minimal value of the optimization problem\cite{harier2000geometric}. 
By using discretization that keeps the symplectic structure, symplectic optimization method that preserve the properties of the original continuous dynamical system is obtained. 
This means that algorithms based on symplectic optimization run faster compared to the gradient descent method. 
Additionally, the symplectic optimization method is more likely to escape local optimal point due to the kinetic energy. 
Thus, it might outperform the WMMSE algorithm which might be trapped in a local optimal point. 
%Furthermore, the performance of the symplectic optimization method is efficient when applied to large-scale data analysis, since the optimization problem involves large-scale vector operations, and symplectic optimization is naturally suited to solve this problem\cite{harier2000geometric}.

To reduce the computational complexity and improve the system performance, we apply symplectic optimization to the precoder design for the considered UCN massive MIMO system in the following section.

\section{Symplectic Optimization based Precoder Design}
\label{section3}
In this section, we first transform the received model into the real field to better employ the symplectic optimization method.
Next, we introduce the augmented Hamiltonian dynamics from the augmented Lagrangian formulation and incorporate a dissipation mechanism to derive the precoder. 
Finally, we employ a numerical scheme to solve the dissipative Hamiltonian system and analyze the computational complexity of the proposed method.

\subsection{Problem Reformulation in Real Domain}
Symplectic optimization is formulated in the real domain \cite{harier2000geometric}, whereas the receive model in the previous section is defined in the complex domain. 
Therefore, to make the application of symplectic optimization rigorous, the considered problem is transformed into its real-valued representation \cite{liu2013comparisons}.

Define $\hat{\mathbf{u}}$ and $\check{\mathbf{H}}$ as 
\begin{align}
	\hat{\mathbf{u}}&=\begin{bmatrix}
		\Re\{u\} \\
		\Im\{u\}
	\end{bmatrix}\\
	\check{\mathbf{H}}&=\begin{bmatrix}
		\Re\{h\} & -\Im\{h\}\\
		\Im\{h\} & \Re\{h\}
	\end{bmatrix}.
\end{align}
Then, $\hat{\mathbf{y}}_k$, $\hat{\mathbf{p}}_{l,k}$ and $\hat{\mathbf{z}}_k$ are defined similarly. 
The received signal model (\ref{yk}) and the rate of UT \(k\) is rewritten as 
\begin{align}
	\hat{\mathbf{y}}_k&=\sum_{l\in \mathcal{B}_k}\check{\mathbf{H}}_{l,k}^T\hat{\mathbf{p}}_{l,k}x_k+\sum_{l\in \mathcal{B}_k}\sum_{t\in \mathcal{U}_l,t\ne k}\check{\mathbf{H}}_{l,k}^T\hat{\mathbf{p}}_{l,t}x_t\nonumber \\ 
	&~~+\sum_{m\ne \mathcal{B}_k}\sum_{t\in \mathcal{U}_m}\check{\mathbf{H}}_{m,k}^T\hat{\mathbf{p}}_{m,t}x_t+\hat{\mathbf{z}}_k  
\end{align}
and
\begin{align}
	\mathcal{R}_k=\log\Big(1+  r_{k}^{-1}
	(\sum_{l\in\mathcal{B}_k}\hat{\mathbf{p}}_{l,k}^T\check{\mathbf{H}}_{l,k})(\sum_{l\in\mathcal{B}_k}\check{\mathbf{H}}_{l,k}^T\hat{\mathbf{p}}_{l,k})\Big)
\end{align}
where 
\begin{align}
	r_{k}= \sum_{t\ne k}(\sum_{m\in\mathcal{B}_t}\hat{\mathbf{p}}_{m,t}^T\check{\mathbf{H}}_{m,k} )(\sum_{m\in\mathcal{B}_t}\check{\mathbf{H}}_{m,k}^T\hat{\mathbf{p}}_{m,t})+ \sigma_z^2\label{rk}.
\end{align}

Recall that $g(\hat{\mathbf{p}})=-f(\hat{\mathbf{p}})$. 
Define the constraint set $\mathcal{P}$ as $\mathcal{P} = \{\hat{\mathbf{p}}|\bm{\phi}(\hat{\mathbf{p}}) = \bm{\rho}\}$.
It also constrains the movement of \(\hat{\mathbf{p}}\).
We rewrite the optimization problem in \eqref{WSR}  as
\begin{align}
	\underset{\hat{\mathbf{p}}\in\mathcal{P}}{\operatorname*{\arg\min}} g(\hat{\mathbf{p}}).
\end{align}

\subsection{Augmented Lagrangian Dynamics}

Symplectic optimization relates optimization problems with dynamical systems. 
Lagrangian and Hamiltonian Systems are two widely used dynamical systems. 
To apply symplectic optimization, we first introduce augmented Lagrangian systems for the precoder design, where ``augmented'' is used since there exists constraints.

The Lagrangian dynamical system is constructed as follows. First, the objective function $g(\hat{\mathbf{p}})$ is viewed as its potential energy, and the precoder vector \(\hat{\mathbf{p}}\) is its position variable.  
Let \(\mathbf{M}\) be the mass matrix, the kinetic energy of the dynamical system is defined as $T(\dot{\hat{\mathbf{p}}})=\frac{1}{2}\dot{\hat{\mathbf{p}}}^T\mathbf{M}\dot{\hat{\mathbf{p}}}$. The Lagrangian is defined as the kinetic energy $T(\dot{\hat{\mathbf{p}}})$ minus the potential energy $g(\hat{\mathbf{p}})$, \textit{i.e.}, $T(\dot{\hat{\mathbf{p}}})-g(\hat{\mathbf{p}})$. When there exists the constraint, 
the augmented Lagrangian is defined with the Lagrange multiplier \(\bm\lambda\) as \cite{harier2000geometric}
\begin{align}
	L(\hat{\mathbf{p}},\dot{\hat{\mathbf{p}}})=T(\dot{\hat{\mathbf{p}}})-g(\hat{\mathbf{p}})-(\bm{\phi}(\hat{\mathbf{p}})-\bm{\rho})^T\bm{\lambda}. \label{Langrange}
\end{align}
The equation of the variational formulation is given as
\begin{align}
	\frac d{dt}(\frac{\partial L}{\partial\dot{\hat{\mathbf{p}}}})-\frac{\partial L}{\partial\hat{\mathbf{p}}}=\mathbf{0}.\label{motion}
\end{align}
Combining (\ref{Langrange}) with (\ref{motion}), we have the following second-order differential equation
\begin{align}
\mathbf{M}\ddot{\hat{\mathbf{p}}}+\nabla g(\hat{\mathbf{p}})+\mathbf{G}(\hat{\mathbf{p}})^T\bm{\lambda}=\mathbf{0}\label{chushi formulate}
\end{align}
where $\nabla g(\hat{\mathbf{p}})$ denotes the gradient of $g(\hat{\mathbf{p}})$, and $G(\hat{\mathbf{p}})$ is the Jacobian matrix of the vector-valued function $\bm\phi(\hat{\mathbf{p}})$ given as $G(\hat{\mathbf{p}})=\frac{\partial}{\partial\hat{\mathbf{p}}}\bm{\phi}(\hat{\mathbf{p}})$. 

Define $\mathbf{v}$ as the velocity coordinate of the position coordinate $\hat{\mathbf{p}}$,
equation \eqref{chushi formulate} and the constraint can be rewritten as a first-order differential equation of the form  
\begin{subequations}\label{first}
	\begin{align}
		&\dot{\hat{\mathbf{p}}}=\mathbf{v} \\ 
		&\mathbf{M}\dot{\mathbf{v}}=-\nabla g(\hat{\mathbf{p}})-\mathbf{G}(\hat{\mathbf{p}})^T\bm\lambda \\ 
		&\bm{\phi}(\hat{\mathbf{p}})= \bm{\rho}.
	\end{align}
\end{subequations}

\par

The augmented Lagrangian dynamical system is time-independent, which means its flow would oscillate around the minimal potential energy. To make the dynamical system converge to the minimal potential energy, we need to define dissipative Lagrangian systems. Furthermore, to obtain a practical algorithm, we have to discretize the continuous system. However, discretizing Lagrangian systems is often fragile\cite{betancourt2018symplectic} and lead to unstable algorithms. Thus, the dissipative augmented Hamiltonian system, which is the dual of the augmented Lagrangian system and a stable dissipative dynamical system \cite{ghirardelli2023optimization}, is introduced in the following.

\subsection{Dissipative Augmented Hamiltonian Dynamics}

The Hamiltonian dynamical system is constructed as follows. 
By defining the momentum coordinate $\hat{\mathbf{q}}$ as the Legendre dual of the position coordinate, we have
\begin{align}
	\hat{\mathbf{q}}=\frac{\partial L}{\partial \dot{\hat{\mathbf{p}}}}=\mathbf{M}\dot{\hat{\mathbf{p}}}.
\end{align}
The Hamiltonian  is defined as the kinetic energy plus potential energy, \textit{i.e.}, $T(\hat{\mathbf{q}}) + g(\hat{\mathbf{p}})$,
where $T(\hat{\mathbf{q}})=\frac{1}{2}\hat{\mathbf{q}}^T\mathbf{M}^{-1}\hat{\mathbf{q}}$.

Since the formula of $\nabla \mathbf{g}(\hat{\mathbf{p}})$ will be used subsequently, we present  its expression here. Recall that $g(\hat{\mathbf{p}})=-\sum_{k\in \mathcal{S}_U}\mathcal{R}_{k}$. From  
\begin{align}
	\frac{\partial\mathcal{R}_{k}}{\partial\hat{\mathbf{p}}_{l,k}}&=r_{k}^{-1}b_{k}\check{\mathbf{H}}_{l,k}(\sum_{l\in \mathcal{B}_k}\check{\mathbf{H}}_{l,k}^T\hat{\mathbf{p}}_{l,k})\\ 
	\frac{\partial\mathcal{R}_{t}}{\partial\hat{\mathbf{p}}_{l,k}}&=r_t^{-2}a_{t}b_{t}\check{\mathbf{H}}_{l,t}(\sum_{l\in \mathcal{B}_k}\check{\mathbf{H}}_{l,t}^T\hat{\mathbf{p}}_{l,k})
\end{align}
where  
\begin{align}
	&a_{k} = (\sum_{l\in \mathcal{B}_k}\hat{\mathbf{p}}_{l,k}^T\check{\mathbf{H}}_{l,k})(\sum_{l\in \mathcal{B}_k}\check{\mathbf{H}}_{l,k}^T\hat{\mathbf{p}}_{l,k})\label{ak}\\
	&b_{k} =(1+a_{k}r_k^{-1})^{-1}  \label{bk}
\end{align}
we have
\begin{align}
	\nabla \mathbf{g}(\hat{\mathbf{p}}_{l,k})&=
	-r_{k}^{-1}b_{k}\check{\mathbf{H}}_{l,k}(\sum_{l\in \mathcal{B}_k}\check{\mathbf{H}}_{l,k}^T\hat{\mathbf{p}}_{l,k}) \nonumber \\
	&~~+\sum_{t\neq k}^Kr_t^{-2}a_{t}b_{t}\check{\mathbf{H}}_{l,t}(\sum_{l\in \mathcal{B}_k}\check{\mathbf{H}}_{l,t}^T\hat{\mathbf{p}}_{l,k}). \label{Hp}
\end{align}
Then, we obtain $\nabla\mathbf{g}({\hat{\mathbf{p}}})=[\nabla\mathbf{g}(\hat{\mathbf{p}}_{1})^T,\cdots,\nabla\mathbf{g}(\hat{\mathbf{p}}_{L})^T]^T$.

Similarly to the augmented Lagrangian, define the augmented Hamiltonian as 
\begin{align}
	H(\hat{\mathbf{p}},\hat{\mathbf{q}})=T(\hat{\mathbf{q}}) + g(\hat{\mathbf{p}}) +(\bm{\phi}(\hat{\mathbf{p}})-\bm{\rho})^T\bm{\lambda}. \label{Ham}
\end{align}
Let \(H_\mathbf{p}\) and \(H_\mathbf{q}\) denote the partial gradients of the augmented Hamiltonian with respect to \(\hat{\mathbf{p}}\) and \(\hat{\mathbf{q}}\), respectively. 
%Then, we have $H_{\mathbf{q}}$ = $\mathbf{M}^{-1}\hat{\mathbf{q}}$ and $H_{\mathbf{p}}=\nabla \mathbf{g}(\hat{\mathbf{p}})$.

In augmented Hamiltonian system, equation $\eqref{first}$ becomes 
\begin{subequations}\label{LDE}
	\begin{align}
		&\dot{\hat{\mathbf{p}}}=H_{\mathbf{q}} = \mathbf{M}^{-1}\hat{\mathbf{q}}\\ 
		&\dot{\hat{\mathbf{q}}}=-H_{\mathbf{p}}  = -\nabla \mathbf{g}(\hat{\mathbf{p}})-\mathbf{G}(\hat{\mathbf{p}})^T\bm{\lambda}\label{LDE2} \\
		&\bm{\phi}(\hat{\mathbf{p}})=\bm{\rho}. 
	\end{align}
\end{subequations}  
By differentiating $\bm{\phi}(\hat{\mathbf{p}})=\bm{\rho}$ in \eqref{LDE} with respect to
 $t$ once and twice, we obtain \cite{harier2000geometric} 
\begin{subequations}\label{Constrain}
	\begin{align}
		&\mathbf{G}(\hat{\mathbf{p}})\dot{\hat{\mathbf{p}}} =\mathbf{G}(\hat{\mathbf{p}}) \mathbf{M}^{-1}\hat{\mathbf{q}}=\mathbf{0} \label{26a} \\ 
		&\frac{\partial}{\partial\hat{\mathbf{p}}}\Big(\mathbf{G}(\hat{\mathbf{p}}) \mathbf{M}^{-1}\hat{\mathbf{q}}\Big)\dot{\hat{\mathbf{p}}}+\mathbf{G}(\hat{\mathbf{p}})\mathbf{M}^{-1}\dot{\hat{\mathbf{q}}}=0. \label{26b}
	\end{align}
\end{subequations}
By solving \eqref{26b} with \eqref{LDE2}, we can get the expression of $\bm{\lambda}$ in the following theorem.

\begin{theorem}\label{thm1}
	The expression of $\bm{\lambda}$ satisfying \eqref{26b} is given by
	\begin{align}
		{\bm{\lambda}}=\mathbf{C}^{-1}\Big(\mathbf{M}^{-1}\mathbf{Q}\hat{\mathbf{q}} -\mathbf{M}\mathbf{G}(\hat{\mathbf{p}})\mathbf{M}^{-1}\nabla\mathbf{g}(\hat{\mathbf{p}})\Big)\in \mathbb{R}^{L \times 1}\label{lambda}
	\end{align} 
	where $\mathbf{C}=\mathrm{diag}(\bm{\rho})$
	denotes the precoder vector power factor, and $\mathbf{Q}=\mathrm{vecdiag}\{\hat{\mathbf{q}}_1^T,\cdots,\hat{\mathbf{q}}_L^T\}$. 
\end{theorem}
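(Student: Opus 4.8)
\emph{Proof proposal.} The plan is to read \eqref{26b} as a linear equation for the multiplier $\bm{\lambda}$ and solve it in closed form, exploiting that $\bm{\lambda}$ enters the twice-differentiated constraint only through the momentum dynamics \eqref{LDE2}. The first thing I would do is make the Jacobian $\mathbf{G}(\hat{\mathbf{p}})=\frac{\partial}{\partial\hat{\mathbf{p}}}\bm{\phi}(\hat{\mathbf{p}})$ explicit. Since the $l$-th power constraint $\phi_l(\hat{\mathbf{p}})=\hat{\mathbf{p}}_l^T\hat{\mathbf{p}}_l$ depends only on the sub-vector $\hat{\mathbf{p}}_l$ assigned to BS $l$, its gradient is $2\hat{\mathbf{p}}_l^T$ in the $l$-th block and zero elsewhere, so $\mathbf{G}(\hat{\mathbf{p}})$ is a vector-diagonal matrix whose $l$-th block is $2\hat{\mathbf{p}}_l^T$. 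This block structure is exactly what will later collapse the various products into the diagonal matrices $\mathbf{C}=\mathrm{diag}(\bm{\rho})$ and $\mathbf{Q}=\mathrm{vecdiag}\{\hat{\mathbf{q}}_1^T,\dots,\hat{\mathbf{q}}_L^T\}$ appearing in \eqref{lambda}.

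Next I would substitute the Hamiltonian dynamics into \eqref{26b}. Using $\dot{\hat{\mathbf{p}}}=\mathbf{M}^{-1}\hat{\mathbf{q}}$ from \eqref{LDE} and $\dot{\hat{\mathbf{q}}}=-\nabla\mathbf{g}(\hat{\mathbf{p}})-\mathbf{G}(\hat{\mathbf{p}})^T\bm{\lambda}$ from \eqref{LDE2}, the second term of \eqref{26b} becomes $-\mathbf{G}\mathbf{M}^{-1}\nabla\mathbf{g}-\mathbf{G}\mathbf{M}^{-1}\mathbf{G}^T\bm{\lambda}$, so $\bm{\lambda}$ appears linearly through the symmetric matrix $\mathbf{G}\mathbf{M}^{-1}\mathbf{G}^T$. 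Isolating that term yields $\mathbf{G}\mathbf{M}^{-1}\mathbf{G}^T\bm{\lambda}=\frac{\partial}{\partial\hat{\mathbf{p}}}(\mathbf{G}\mathbf{M}^{-1}\hat{\mathbf{q}})\mathbf{M}^{-1}\hat{\mathbf{q}}-\mathbf{G}\mathbf{M}^{-1}\nabla\mathbf{g}$, and it then remains to invert $\mathbf{G}\mathbf{M}^{-1}\mathbf{G}^T$ and to recast the two right-hand pieces into the stated form.

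For the inversion, I would evaluate $\mathbf{G}\mathbf{M}^{-1}\mathbf{G}^T$ using the block form of $\mathbf{G}$: the off-diagonal blocks vanish, and the $l$-th diagonal entry is proportional to $\hat{\mathbf{p}}_l^T\hat{\mathbf{p}}_l$, which equals $\rho_l$ on the constraint manifold $\bm{\phi}(\hat{\mathbf{p}})=\bm{\rho}$. This is precisely where the constraint is invoked to produce $\mathbf{C}=\mathrm{diag}(\bm{\rho})$, whose inverse is immediate because it is diagonal. The main obstacle, and the step I would treat most carefully, is the curvature term $\frac{\partial}{\partial\hat{\mathbf{p}}}\big(\mathbf{G}(\hat{\mathbf{p}})\mathbf{M}^{-1}\hat{\mathbf{q}}\big)\mathbf{M}^{-1}\hat{\mathbf{q}}$: here I must differentiate the velocity-level constraint $\mathbf{G}\mathbf{M}^{-1}\hat{\mathbf{q}}$ with respect to $\hat{\mathbf{p}}$ with $\hat{\mathbf{q}}$ held fixed, as dictated by the splitting in \eqref{26b}, observe that differentiating the $l$-th component $2\hat{\mathbf{p}}_l^T(\mathbf{M}^{-1}\hat{\mathbf{q}})_l$ again produces a vector-diagonal matrix built from $(\mathbf{M}^{-1}\hat{\mathbf{q}})_l^T$, and then contract it against $\mathbf{M}^{-1}\hat{\mathbf{q}}$. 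Matching the resulting block pattern with $\mathrm{vecdiag}\{\hat{\mathbf{q}}_1^T,\dots,\hat{\mathbf{q}}_L^T\}$ is what identifies this term with $\mathbf{M}^{-1}\mathbf{Q}\hat{\mathbf{q}}$; combining it with the $\mathbf{C}^{-1}$ prefactor and the $\mathbf{G}\mathbf{M}^{-1}\nabla\mathbf{g}$ term then delivers \eqref{lambda}. Keeping the $\mathbf{M}$-factors and the constant arising from $\nabla(\hat{\mathbf{p}}_l^T\hat{\mathbf{p}}_l)$ bookkept consistently across these two simplifications is the only genuine source of difficulty.
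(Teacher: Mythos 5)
Your proposal follows essentially the same route as the paper's Appendix~A: write $\mathbf{G}(\hat{\mathbf{p}})$ in its block (vecdiag) form, substitute the Hamiltonian dynamics \eqref{LDE} into \eqref{26b}, use $\mathbf{G}(\hat{\mathbf{p}})\mathbf{G}(\hat{\mathbf{p}})^T=\mathrm{diag}(\bm{\rho})=\mathbf{C}$ on the constraint manifold, and identify the curvature term with $\mathbf{M}^{-1}\mathbf{Q}\hat{\mathbf{q}}$. The only divergence is bookkeeping: the paper silently takes the $l$-th block of $\mathbf{G}$ to be $\hat{\mathbf{p}}_l^T$ rather than $2\hat{\mathbf{p}}_l^T$, so carrying the factor of $2$ consistently (as you propose) would rescale $\bm{\lambda}$ relative to \eqref{lambda} without changing the product $\mathbf{G}(\hat{\mathbf{p}})^T\bm{\lambda}$ that enters the dynamics.
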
 
\begin{proof}
	See in Appendix \ref{fulu1}.
	%The proof is omitted due to space limit.
\end{proof}
Inserting the obtained expression ${\bm{\lambda}}$ from \eqref{lambda} into augmented Hamiltonian system \eqref{LDE} gives a differential equation on the  manifold $\mathcal{M}$ defined as \cite{harier2000geometric}
\begin{align}
\mathcal{M}	= \{(\hat{\mathbf{p}}, \hat{\mathbf{q}})\ |\ \bm{\phi}(\hat{\mathbf{p}})=\bm{\rho},\ \mathbf{G}(\hat{\mathbf{p}})\mathbf{M}^{-1}\hat{\mathbf{q}} = 0\}.\label{manifold} 
\end{align}
Define the configuration manifold as \(\mathcal{P} = \{\hat{\mathbf{p}}\ ;\  \bm{\phi}(\hat{\mathbf{p}})=\bm{\rho}\}\), which denotes the constrained position space.
For a fixed $\hat{\mathbf{p}} \in \mathcal{P}$, the Lagrangian \(	L(\hat{\mathbf{p}},\dot{\hat{\mathbf{p}}})\) is a function on the tangent space \(T_{p}\mathcal{P}\).
In this system, the tangent space \( T_{p}\mathcal{P} \) represents the set of all possible directions of motion at the point \( p \) on the manifold \( \mathcal{P} \).
Through the Legendre transformation, we can obtain the corresponding cotangent space \( T_p^*\mathcal{P} \) at the point \( \hat{\mathbf{p}}\) with the identification
\begin{align}
	T_p^*\mathcal{P} = \{ \hat{\mathbf{q}} | \hat{\mathbf{q}} = \mathbf{M} \dot{\hat{\mathbf{p}}} \  ; \  \dot{\hat{\mathbf{p}}} \in T_p\mathcal{P}\}.
\end{align}
Let \(T^* \mathcal{P}\) be the  cotangent bundle of \(\mathcal{P}\) defined as \(T^* \mathcal{P} = \{(\hat{\mathbf{p}} , \hat{\mathbf{q}})\} | \hat{\mathbf{p}} \in \mathcal{P}\ , \ \hat{\mathbf{q}} \in T_{p}^*\mathcal{P}\}\) .
The manifold \(\mathcal{M}\)  in \eqref{manifold} satisfies
\begin{align} 
	\mathcal{M} = T^*\mathcal{P}.
\end{align}
Therefore, \(\mathcal{M}\) is the cotangent bundle of \(\mathcal{P}\), and the augmented Hamiltonian system \eqref{LDE} can be viewed as a system on the cotangent bundle of \(\mathcal{P}\) \cite{harier2000geometric}. 
As a result, the Hamiltonian system is expressed within the framework of differential geometry on the manifold.

In addition, the symmetry \(T(\dot{\mathbf{p}}) = T(-\dot{\mathbf{p}})\) implies that system \eqref{Ham} is reversible.
By differentiate the Hamiltonian system of \eqref{Ham} with respect to $t$, we have
\begin{align}
-(\mathbf{M}^{-1}\hat{\mathbf{q}})^T \nabla \mathbf{g}(\hat{\mathbf{p}}) - (\mathbf{M}^{-1}\hat{\mathbf{q}})^T\mathbf{G}(\hat{\mathbf{p}})^T\bm{\lambda} + \nabla \mathbf{g}(\hat{\mathbf{p}})^T\mathbf{M}^{-1}\hat{\mathbf{q}}.
\end{align}
It is obviously that the first and last term cancel, and the second term equals to \({0}\) according to \eqref{26a}.
Thus, the Hamiltonian system \eqref{LDE} is preservation and remains constant along solutions of \eqref{LDE}.
Moreover, the flow produced by \eqref{LDE} is a symplectic transformation on \(\mathcal{M}\)\cite{harier2000geometric}.

As previously mentioned, the augmented Hamiltonian system is time-independent and conservative, which results in the system that potential is not easily minimized. To make the system converge to the minimal potential energy, dissipation is introduced, \textit{i.e.}, multiplying the momentum by a coefficient $\gamma$ to represent the momentum decreasing over time. \par
Let the dissipative augmented Hamiltonian be defined as   \cite{ghirardelli2023optimization, mclachlan2001conformal} 
\begin{align}
	\tilde{H}(\breve{\mathbf{p}},\breve{\mathbf{q}})   =e^{\gamma t}(H(\breve{\mathbf{p}},e^{-\gamma t}\breve{\mathbf{q}}))
\end{align}
where $(\breve{\mathbf{p}},\breve{\mathbf{q}})$ is defined as $(\breve{\mathbf{p}},\breve{\mathbf{q}})=(\hat{\mathbf{p}},e^{\gamma t}\hat{\mathbf{q}})$. The flow of the dissipative augmented Hamiltonian system are determined by the following differential equations  \cite{mclachlan2001conformal} 
\begin{subequations}
	\begin{align}
		\dot{\breve{\mathbf{p}}}&=\tilde{H}_{\breve{\mathbf{q}}}(\breve{\mathbf{p}},\breve{\mathbf{q}})=\mathbf{M}^{-1}\hat{\mathbf{q}}=\dot{\hat{\mathbf{p}}}\\
		\dot{\breve{\mathbf{q}}}&=-\tilde{H}_{\breve{\mathbf{p}}}(\breve{\mathbf{p}},\breve{\mathbf{q}})\nonumber\\
		&=-e^{\gamma t}(\nabla \mathbf{g}(\hat{\mathbf{p}})+\mathbf{G}(\hat{\mathbf{p}})^T\bm{\lambda}).
	\end{align}
\end{subequations}
From $\breve{\mathbf{q}}=e^{\gamma t}\hat{\mathbf{q}}$, we obtain $\dot{\breve{\mathbf{q}}}=e^{\gamma t}(\dot{\hat{\mathbf{q}}}+\gamma \hat{\mathbf{q}})$. Then,  
we have
\begin{subequations}\label{30}
	\begin{align}
		&\mathbf{M}\dot{\hat{\mathbf{p}}}=\hat{\mathbf{q}} \\
		&\dot{\hat{\mathbf{q}}}=-\nabla \mathbf{g}(\hat{\mathbf{p}})-\mathbf{G}(\hat{\mathbf{p}})^T\bm{\lambda}-\gamma\hat{\mathbf{q}}\\
		&\bm{\phi}(\hat{\mathbf{p}})=\bm{\rho}
	\end{align}
\end{subequations}
where $\gamma > 0$ is the dissipative coefficient, and  \(\hat{\mathbf{p}}\) is the precoder vector. 
Equation \eqref{30} is a dissipative version of (\ref{LDE}). 
The dynamics resulting from \eqref{30} converges to a stationary point of \eqref{WSR}.

Let $\Phi_t^C(\hat{\mathbf{p}},\hat{\mathbf{q}}): \mathcal{M} \rightarrow \mathcal{M}$ and $\Phi_t(\hat{\mathbf{p}},\hat{\mathbf{q}}): \mathcal{M} \rightarrow \mathcal{M}$ denote the flow of dynamical systems in \eqref{LDE} and \eqref{30}, respectively.
The flow $\Phi_t$ can be found by composing  $\Phi_t^C$ and $\Phi_t^D$, where
$\Phi_t^D(\hat{\mathbf{p}},\hat{\mathbf{q}})=(\hat{\mathbf{p}},e^{-\gamma t}\hat{\mathbf{q}})$. To obatin a practical algorithm, we approximate the flow $\Phi_t$ with a numerical method in the following. 

\subsection{Discrete Dissipative Hamiltonian Dynamics for Precoder Design}

The dissipative augmented Hamiltonian system in the previous section is a continuous one. To obtain a numerical method, the RATTLE integrator is utilized to discretize the continuous system to obtain an approximate flow of \eqref{30} \cite{ghirardelli2023optimization}. RATTLE is a step-by-step iterative method for numerically solving constrained dynamics problems at a time step.
A discrete approximation $\Phi_h^C$ of $\Phi_t^C$ with the RATTLE is defined as
\begin{subequations}
	\begin{align}
		&\hat{\mathbf{q}}_{n+1/2}= \hat{\mathbf{q}}_{n}-\frac{h}{2}(\nabla \mathbf{g}(\hat{\mathbf{p}}_n)+\bm{\mathbf{G}(\hat{\mathbf{q}}_n)^T\lambda}_{n})\label{diedai11} \\
		&\hat{\mathbf{p}}_{n+1}=\hat{\mathbf{p}}_{n}+h\mathbf{M}^{-1}\hat{\mathbf{q}}_{n+\frac12} \label{diedai12} \\
		&\hat{\mathbf{q}}_{n+1}= (\hat{\mathbf{q}}_{n+1/2}-\frac{h}{2}(\nabla \mathbf{g}(\hat{\mathbf{p}}_{n+1})+\mathbf{G}(\hat{\mathbf{p}}_{n+1})^T\bm{\mu}_n))  \label{diedai13}
	\end{align}
\end{subequations}
where $h$ denotes the step length. 

For the dissipative Hamiltonian system, an approximation $\Phi_h$ of $\Phi_t$ is obtained as a symmetric leapfrog
composition as $\Phi_h=\Phi_{h/2}^D\circ\Phi_h^C\circ\Phi_{h/2}^D$. 
Then, the position and momentum coordinates ($\hat{\mathbf{p}}$,  $\hat{\mathbf{q}}$) are iteratively obtained as \cite{ghirardelli2023optimization}
\begin{subequations}\label{diedaizong}
	\begin{align}
		&\hat{\mathbf{q}}_{n+1/2}=e^{-\gamma h/2}\hat{\mathbf{q}}_{n}-\frac{h}{2}(\nabla \mathbf{g}(\hat{\mathbf{p}}_n)+\bm{\mathbf{G}(\hat{\mathbf{q}}_n)^T\lambda}_{n}) \label{diedai21} \\
		&\hat{\mathbf{p}}_{n+1}=\hat{\mathbf{p}}_{n}+h\mathbf{M}^{-1}\hat{\mathbf{q}}_{n+\frac12} \label{diedai22} \\
		&\hat{\mathbf{q}}_{n+1}=e^{-\gamma h/2}(\hat{\mathbf{q}}_{n+1/2}-\frac{h}{2}(\nabla \mathbf{g}(\hat{\mathbf{p}}_{n+1})+\mathbf{G}(\hat{\mathbf{p}}_{n+1})^T\bm{\mu}_n)). \label{diedai23}
	\end{align}
\end{subequations}
Now, the only thing unknown is the calculation of \(\bm{\mu}_n\).  For simplicity, we set $\mathbf{M}=\mathbf{I}$ hereafter. 
The constraint becomes $\mathbf{G}(\hat{\mathbf{p}}_{n+1})\hat{\mathbf{q}}_{n+1}=0$, From the constraint, we derive the calculation of \(\bm{\mu}_n\) in the following theorem.
\begin{theorem}\label{thm2}
	Expression of $\bm{\mu}_n$ satisfies \eqref{26a} and \eqref{diedai23} is given by
	\begin{align}
		\bm{\mu}_n=\mathbf{C}^{-1}\frac{2\mathbf{G}(\hat{\mathbf{p}}_{n+1})\hat{\mathbf{q}}_{n+1/2}-h\mathbf{G}(\hat{\mathbf{p}}_{n+1})\nabla\mathbf{g}(\hat{\mathbf{p}}_{n+1})}{h} 
	\end{align}  
	where $\mathbf{C}=\mathrm{diag}(\bm{\rho})$
	denotes precoder vector power factor.
\end{theorem}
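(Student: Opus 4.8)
The plan is to mirror the derivation of Theorem~\ref{thm1}, but now at the level of the discrete RATTLE update rather than the continuous flow. In the RATTLE scheme the multiplier $\bm{\mu}_n$ is precisely what enforces the hidden momentum constraint at the new point, i.e. \eqref{26a} evaluated at step $n+1$; since we have set $\mathbf{M}=\mathbf{I}$, this constraint reads $\mathbf{G}(\hat{\mathbf{p}}_{n+1})\hat{\mathbf{q}}_{n+1}=\mathbf{0}$. The strategy is therefore to substitute the explicit expression for $\hat{\mathbf{q}}_{n+1}$ supplied by the update \eqref{diedai23} into this constraint and solve the resulting linear equation for $\bm{\mu}_n$.

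First I would left-multiply \eqref{diedai23} by $\mathbf{G}(\hat{\mathbf{p}}_{n+1})$ and set the left-hand side to $\mathbf{0}$. Because the common prefactor $e^{-\gamma h/2}$ is strictly positive, it divides out, leaving
\begin{align}
	\mathbf{G}(\hat{\mathbf{p}}_{n+1})\hat{\mathbf{q}}_{n+1/2}-\tfrac{h}{2}\mathbf{G}(\hat{\mathbf{p}}_{n+1})\nabla\mathbf{g}(\hat{\mathbf{p}}_{n+1})-\tfrac{h}{2}\mathbf{G}(\hat{\mathbf{p}}_{n+1})\mathbf{G}(\hat{\mathbf{p}}_{n+1})^T\bm{\mu}_n=\mathbf{0}.
\end{align}
I would then isolate the single term carrying $\bm{\mu}_n$, so that $\tfrac{h}{2}\mathbf{G}(\hat{\mathbf{p}}_{n+1})\mathbf{G}(\hat{\mathbf{p}}_{n+1})^T\bm{\mu}_n$ stands on one side and the two known vectors on the other, and finally invert the matrix $\mathbf{G}(\hat{\mathbf{p}}_{n+1})\mathbf{G}(\hat{\mathbf{p}}_{n+1})^T$ to recover $\bm{\mu}_n$.

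The main obstacle, and the step that makes the final formula both explicit and cheap, is evaluating $\mathbf{G}(\hat{\mathbf{p}}_{n+1})\mathbf{G}(\hat{\mathbf{p}}_{n+1})^T$. Here I would exploit that the $l$-th component of $\bm{\phi}$ depends only on the block $\hat{\mathbf{p}}_l$, so the Jacobian $\mathbf{G}(\hat{\mathbf{p}})=\partial\bm{\phi}/\partial\hat{\mathbf{p}}$ is vector-diagonal, its $l$-th row supported on $\hat{\mathbf{p}}_l$ alone and proportional to $\hat{\mathbf{p}}_l^T$. Consequently $\mathbf{G}(\hat{\mathbf{p}}_{n+1})\mathbf{G}(\hat{\mathbf{p}}_{n+1})^T$ is diagonal, its off-diagonal entries vanishing by disjointness of supports, and its $l$-th diagonal entry collapsing to the squared norm of the $l$-th block of $\hat{\mathbf{p}}_{n+1}$ up to the scalar produced by differentiating the quadratic constraint. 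Crucially, $\hat{\mathbf{p}}_{n+1}$ already satisfies the position constraint $\bm{\phi}(\hat{\mathbf{p}}_{n+1})=\bm{\rho}$ --- this being the role of $\bm{\lambda}_n$ in the preceding half-step \eqref{diedai21} --- so each such norm equals $\rho_l$ and the entire product reduces to $\mathbf{C}=\mathrm{diag}(\bm{\rho})$. Substituting $\mathbf{G}(\hat{\mathbf{p}}_{n+1})\mathbf{G}(\hat{\mathbf{p}}_{n+1})^T=\mathbf{C}$ into the isolated equation and solving then yields the claimed expression, replacing a potentially large matrix inversion by a trivial diagonal one. I expect the only delicate point to be the bookkeeping of that scalar differentiation factor, which must be absorbed consistently with the normalization already adopted in Theorem~\ref{thm1}.
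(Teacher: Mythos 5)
Your proposal follows exactly the paper's own argument: substitute the update \eqref{diedai23} into the hidden constraint $\mathbf{G}(\hat{\mathbf{p}}_{n+1})\hat{\mathbf{q}}_{n+1}=\mathbf{0}$ from \eqref{26a}, cancel the scalar $e^{-\gamma h/2}$, and solve the resulting linear equation for $\bm{\mu}_n$ using $\mathbf{G}(\hat{\mathbf{p}}_{n+1})\mathbf{G}(\hat{\mathbf{p}}_{n+1})^T=\mathbf{C}=\mathrm{diag}(\bm{\rho})$, which the paper establishes in the proof of Theorem~\ref{thm1}. The approach and all key steps match; your extra remark on the block-diagonal structure of $\mathbf{G}$ and the normalization of the Jacobian is consistent with the convention already adopted there.
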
 
\begin{proof}
	See in Appendix \ref{fulu2}.
%	The proof is omitted due to space limit.
\end{proof}
Choosing step sizes is a critical task of the optimization method, as it directly affects both the convergence and the complexity of the algorithm per iteration \cite{nocedal1999numerical,boyd2004convex}. 
Line search methods are commonly used methods to determine the step size with fixed search direction. 
However, the main challenge is to balance the sufficient decrease in the objective function with the additional computational cost introduced by the step size search\cite{nocedal1999numerical}.
Both the Armijo rule\cite{armijo1966minimization} and Wolfe\cite{wolfe1969convergence} conditions ensure convergence in line search procedures.
However, since multiple evaluations of the objective function are required at each iteration to satisfy the conditions, the computational complexity becomes significantly high, especially in the considered UCN massive MIMO system.

We employ a proportional controller  \cite{wadia2021optimization} to adaptively adjust the step size based on
a parameter $\delta_n$.
The step size update rule is given by  \cite{wadia2021optimization}
\begin{align}
	h_{n+1}=(\frac r {\delta_{n+1}})^{\frac{\theta}{2}}h_n \label{step}
\end{align}
where $r$ and $\theta$ denote hyper-parameters.
When $\theta = 0$, the method reduces to that using a fixed step size.
The \(\delta_n\) represents the current error and serves as an indicator of the accuracy or reliability of the current gradient update step, which can be obtained as
\begin{align}
\delta_{n+1} = \left\| \hat{\mathbf{p}}_{n} - \hat{\mathbf{p}}_{n+1} - \frac{1}{2}h \left( \nabla \mathbf{g}(\hat{\mathbf{p}}_{n}) + \nabla \mathbf{g}(\hat{\mathbf{p}}_{n+1}) \right) \right\|.
\end{align}

\begin{figure}[tbp]
	\begin{algorithm}[H]
		\caption{Precoder Design for UCN Massive MIMO System with Symplectic Optimization}\label{alg}
		\begin{algorithmic}[1]
			\REQUIRE Initialize step length h, the precoder $\hat{\mathbf{p}}_{0}$ and the momentum coordinate $\hat{\mathbf{q}}_{0}$
			\ENSURE precoder $\hat{\mathbf{p}}$
			\STATE \textbf{repeat}
			\STATE Calculate Lagrange multiplier  $\bm{\lambda}_n$ as
			\[{\bm{\lambda}_n}=\mathbf{C}^{-1}\Big(\mathbf{Q}_n\hat{\mathbf{q}}_n -\mathbf{G}(\hat{\mathbf{p}}_n)\nabla\mathbf{g}(\hat{\mathbf{p}}_n)\Big)\]
			\STATE Calculate the momentum $\hat{\mathbf{q}}_{n+1/2}$ as 
			\[\hat{\mathbf{q}}_{n+1/2}=e^{-\gamma h/2}\hat{\mathbf{q}}_{n}-\frac{h}{2}(\nabla \mathbf{g}(\hat{\mathbf{p}}_n)+\bm{\mathbf{G}(\hat{\mathbf{q}}_n)^T\lambda}_{n}).\]
			\STATE Compute the precoder $\hat{\mathbf{p}}_{n+1}$ as
			\[\hat{\mathbf{p}}_{n+1}=\hat{\mathbf{p}}_{n}+h \hat{\mathbf{q}}_{n+\frac12}.\]
			\STATE Update $\nabla g(\hat{\mathbf{p}}_{n})$ according to (\ref{Hp}).
			\STATE Calculate $\bm{\mu}_n$ as
			\[	\bm{\mu}_n=\mathbf{C}^{-1}\frac{2\mathbf{G}(\hat{\mathbf{p}}_{n+1})\hat{\mathbf{q}}_{n+1/2}-h\mathbf{G}(\hat{\mathbf{p}}_{n+1})\nabla\mathbf{g}(\hat{\mathbf{p}}_{n+1})}{h} \]
			\STATE Update the momentum $\hat{\mathbf{q}}_{n+1}$ as
			\[\hat{\mathbf{q}}_{n+1}=e^{-\gamma h/2}(\hat{\mathbf{q}}_{n+1/2}-\frac{h}{2}(\nabla \mathbf{g}(\hat{\mathbf{p}}_{n+1})+\mathbf{G}(\hat{\mathbf{p}}_{n+1})^T\bm{\mu}_n)).\]
			\STATE Compute the step size as
			\[	h_{n+1}=(\frac r {\delta_{n+1}})^{\frac{\theta}{2}}h_n.\]
			\STATE Update the iteration as $d=d+1$.
			\STATE \textbf{until} convergence
		\end{algorithmic}
	\end{algorithm}
\end{figure}

A larger \(\theta\) increases the aggressiveness of the step size adaptation, which may accelerate convergence by reducing the required number of iterations\cite{wadia2021optimization}.
The hyper-parameter \(r\) controls the discrepancy between the two update schemes.
If the current error \(\delta_{n}\)
exceeds \(r\), the step size is decreased; otherwise, it is increased. This forms an adaptive adjustment strategy that ensures each update is both stable and as large as possible, thereby improving optimization efficiency.

Algorithm 1 summarizes the proposed symplectic optimization for the precoder design in the considered UCN massive MIMO system.

\subsection{Complexity Analysis}

The computational complexity of Algorithm \ref{alg} per iteration is primarily dominated by the calculation of $\nabla\mathbf{g}(\hat{\mathbf{p}})$ in \eqref{Hp}. 
The user-centric rule reduces the system dimension, and the corresponding complexity analysis is presented as follows.

For the interference term \(r_k\) in \eqref{rk}, we consider all UTs \(t \ne k\), where each contributes a term of
\(\sum_{m \in \mathcal{B}_t} \hat{\mathbf{p}}_{m,t}^T \check{\mathbf{H}}_{m,k}\), whose computational complexity is \(\mathcal{O}(|\mathcal{B}_t|M_t)\). 
Since the summation involves \(|\mathcal{B}_t|\) BSs, the overall complexity for computing \(r_k\) is 
\(\mathcal{O}\left( M_t \sum_{ t \ne k} |\mathcal{B}_t| \right)\), where \(|\mathcal{B}_t|\) denotes the number of serving cluster of UT \(t\).

To evaluate the gradient \(\nabla\mathbf{g}(\hat{\mathbf{p}})\), \(r_k\) needs to be computed for all UTs \(k = 1, \dots, K\). 
Therefore, the total computational complexity for computing all \(r_k\) is
\begin{align}
	\mathcal{O}\left( M_t  (K - 1) \sum_{t = 1}^{K} |\mathcal{B}_t| \right).
\end{align}

For the first term of \eqref{Hp}
\[
- r_k^{-1} b_k \check{\mathbf{H}}_{l,k} \left( \sum_{l \in \mathcal{B}_k} \check{\mathbf{H}}_{l,k}^T \hat{\mathbf{p}}_{l,k} \right)
\]
each inner product \(\check{\mathbf{H}}_{l,k}^T \hat{\mathbf{p}}_{l,k}\) requires \(\mathcal{O}(M_t)\) operations,  and the summation \(\sum_{l \in \mathcal{B}_k} \check{\mathbf{H}}_{l,k}^T \hat{\mathbf{p}}_{l,k}\) needs to be computed once for each BS in the cluster \(\mathcal{B}_k\).
So the computational complexity of this term is 
\[\mathcal{O}(|\mathcal{B}_k|M_t)\]
where \(|\mathcal{B}_k|\) denotes the number of serving cluster of UT \(k\).

Note that the scalar term \(a_k\) in \eqref{ak} shares the same intermediate result as the first term in \eqref{Hp}, and thus does not incur additional computational cost when reused.
The term \(b_k\) in \eqref{bk} only involves scalar operations based on the previously computed values of \(a_k\) and \(r_k\), and hence has negligible complexity.
  
For the second term of \eqref{Hp}
\[
\sum_{t\neq k}^Kr_t^{-2}a_{t}b_{t}\check{\mathbf{H}}_{l,t}(\sum_{l\in \mathcal{B}_k}\check{\mathbf{H}}_{l,t}^T\hat{\mathbf{p}}_{l,k})
\]
which involves a summation over \(t \neq k\), for each \(t\), we need to calculate \(\check{\mathbf{H}}_{l,t}^T \hat{\mathbf{p}}_{l,k}\), and its complexity is \(\mathcal{O}(M_t)\). 
Like the first term, the computational complexity of \(\sum_{l \in \mathcal{B}_k} \check{\mathbf{H}}_{l,t}^T \hat{\mathbf{p}}_{l,k}\) is \(\mathcal{O}(|\mathcal{B}_k| M_t)\). 
Since the summation is from $1$ to \(K\) for \(t\) (excluding \(k\)), the complexity of computing the second term is \[\mathcal{O}((K-1) |\mathcal{B}_k|M_t).\]

For full gradient  $\nabla\mathbf{g}({\hat{\mathbf{p}}})=[\nabla\mathbf{g}(\hat{\mathbf{p}}_{1})^T,\cdots,\nabla\mathbf{g}(\hat{\mathbf{p}}_{L})^T]^T$, the computation is repeated multiple times across all UTs and BSs, so the total complexity is 
\begin{align}
	\mathcal{O}\big(N_S (\sum_{l \in \mathcal{S}_B} \sum_{k \in \mathcal{U}_l} K  |\mathcal{B}_k|  M_t +  M_t  (K - 1) \sum_{t = 1}^{K} |\mathcal{B}_t| )\big)
\end{align}
where $N_S$ denotes the iteration number of the symplectic method.

In \cite{wu2018precoder}, the WMMSE precoder for the coordinated multi-point (CoMP) joint transmission (JT) scenario is derived, making it compatible with UCN massive MIMO systems.  
The computational complexity of the WMMSE precoder in UCN massive MIMO system is \(	\mathcal{O}\Big(N_W\big((4K + 1) M_t \sum_{l \in \mathcal{S}_B} U_l+BM_t^3+
B\sum_{l \in \mathcal{S}_B}U_lM_t^2+\sum_{l \in \mathcal{S}_B}U_lM_t^2\big)\Big)\) \cite{sun2025precoder}, where $N_W$ denotes iteration numbers of WMMSE method.

Obviously, as the system dimensions and the number of BSs serving UTs increase, the symplectic method has much lower complexity compared to 
 the WMMSE method when their numbers of iterations are the same.

\section{Simulation Results}
\label{section4}
In this section, we evaluate the proposed precoder design for UCN massive MIMO systems with symplectic optimization. The widely used QuaDRiGa channel model \cite{jaeckel2014quadriga} is adopted to generate the channels. The “3GPP\_38.901\_UMa\_NLoS”\cite{jaeckel2014quadriga}  scenario is considered. 
Moreover, a tri-sector configuration is adopted for each gNodeB(gNB) to ensure better coverage \cite{3gpp.38.300}, and the number of gNBs deployed in the system is $7$. 
As illustrated in Fig. \ref{BS}, each gNB consists of $3$ BSs, and each BS covers a $120^\circ$ sector \cite{3gpp.38.104}.
Thus, the system consists of a total of \(B = 21\) BSs. The number of antennas for the UPA equipped in each BS is $N_t = 8 \times 16$. The number of single antenna UTs is \(U = 300\), and they are randomly generated within a circle of radius $1000m$.
For each UT,  a cluster of BSs that offer the best channel quality are selected.
The heights of each BS and UT are $25m$ and $1.5m$, respectively. 
For simplicity, we assume $w_1 = w_2 = \cdots = w_{S_{U}} = 1$.
To reproduce the realistic scenario, the NLOS model is used to account for large-scale decay. 
The number of BSs in the clusters for each UT are assumed to be the same, \textit{i.e.}, $B_1 = B_2 = \cdots = B_U = B_{\mathrm{sc}}$. The center frequency is $6.7$ GHz.
Main simulation parameters are also provided in Table \ref{table1}.
\begin{figure}[btp]
	\centerline{\includegraphics[width=1\columnwidth]{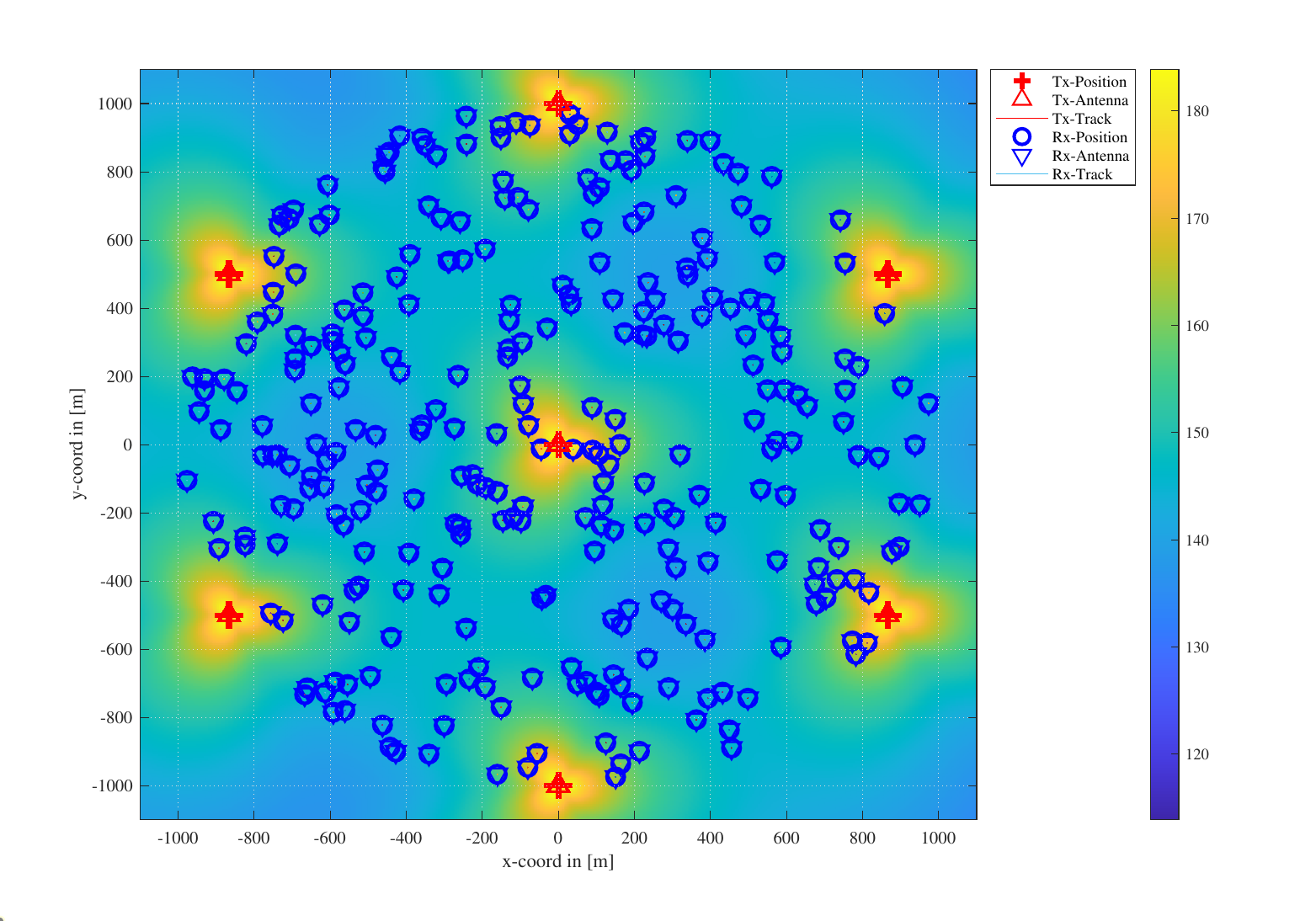}}
	\caption{The layout of the UCN massive MIMO system.}
	\label{BS}
\end{figure}

In the UCN massive MIMO system, each UT needs to be served by a different set of BSs, which necessitates the design of a BS scheduling strategy. 
In this paper, we adopt a reference signal received power (RSRP)-based UT association scheme to assign each UT to a primary serving BS.
The RSRP between each BS and each UT is calculated as follow
\begin{align}
\mathrm{RSRP}_{l,k} = 10 \operatorname{log}(\mathbf{h}_{l,k}^H\mathbf{h}_{l,k}).
\end{align}
Each UT is assigned to the BS with the highest RSRP as its primary serving BS.
Subsequently, the \(\Delta \mathrm{RSRP}_{l,k}\) is computed to assist in selecting additional serving BSs for each UT.
It can be obtained as
\begin{align}
	\Delta\mathrm{RSRP}_{l,k} & =\max_{l \in \mathcal{B}_k}\mathrm{RSRP}_{l,k}-\mathrm{RSRP}_{l,k}
\end{align}
 which denotes the power gap between a candidate BS and the UT's primary serving BS.
 Finally, by sorting the \(\Delta\mathrm{RSRP}_{l,k}\) values in ascending order, a subset of BSs are selected to serve each UT.
\begin{table}[tbp]
	\caption{Parameter Settings}
	\label{table1}
	\begin{center}
		\begin{tabular}{cc}
			\toprule  
			Parameter &Value\\
			\hline
			Scenario &3GPP\_38.901\_UMa\_NLoS \\
			Center frequency &6.7GHz\\
			Number of BS antennas $M_v \times M_h$ &8 $\times$ 16\\
			Number of UT antennas $M_r$ &1 \\
			Number of subcarriers $M_c$ &2048\\
			Subcarrier spacing $\Delta f$&30KHz\\
			Speed of each UT &3 km/h\\
			$\sigma_z^2$ &-104 dBm\\
			\bottomrule
		\end{tabular}
	\end{center}
\end{table}

The performance comparison between the proposed precoder, the RZF precoder and the WMMSE precoder on the sum-rate metrics with \(B_{sc} = 21\) is shown in Fig. \ref{performance}. 
Both the symplectic optimization and the WMMSE precoder methods are executed with \(N = 50\) iterations.
We use two kinds of initial values to better verify the performance of the proposed method. 
Notably, the symplectic optimization involves no matrix inversion with random initial value.
Simulation results show that the proposed method outperforms the WMMSE precoder in the whole transmit power regimes.
The proposed method has a 29.3\% performance gain with RZF initial value and 13.2\% performance gain with random initial value when $P = 24 \  \mathrm{dBm}$. 
The WMMSE precoder needs the initial values of RZF, which increase the computational complexity.
It also illustrates that the proposed method is much better than the RZF method, which requires the least computational complexity. 
Using random initial values avoids the matrix inversion required by the RZF precoders, and the performance becomes worse but still better than the WMMSE precoder. 
Since the proposed method has much lower complexity per iteration compared to the WMMSE precoder, the comparison shows the efficiency of the symplectic method is high.

\begin{figure}[htbp]
	\centerline{\includegraphics[width=1\columnwidth]{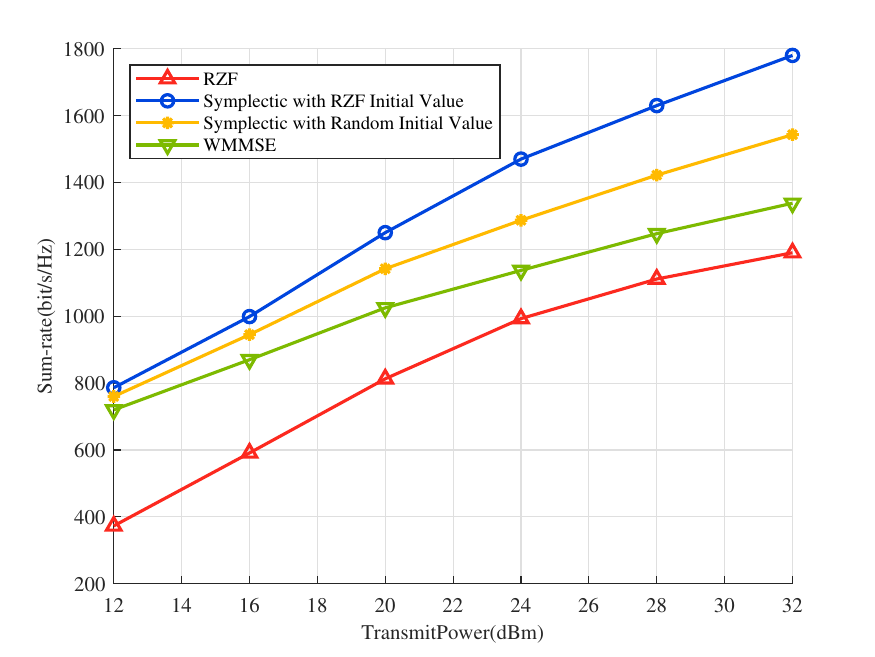}}
	\caption{Comparison of the sum-rate performance of the symplectic method with other methods.}
	\label{performance}
\end{figure}

Next, we investigate the convergence behavior of the proposed symplectic method in different transmit power regimes for precoder design in the UCN massive MIMO system. 
In Fig. \ref{convergence}, we plot the sum-rate performance over iteration number of the symplectic method when $P = 16, 20, 24 \ \mathrm{dBm}$, and all cases employ a well-matched adaptive step length and RZF initial value. 
Observing Fig. \ref{convergence}, we have that all three power levels achieve convergence within \(N = 42\) iterations. Moreover, the proposed method converge rapidly at low transmit power, and 18 iterations are enough for the proposed method to converge at $P = 16\  \mathrm{dBm}$.
As transmit power increases, more iterations are required.
Specifically, \(42\) iterations are required to ensure convergence at $P = 24 \ \mathrm{dBm}$.

\begin{figure}[tbp]
	\centerline{\includegraphics[width=1\columnwidth]{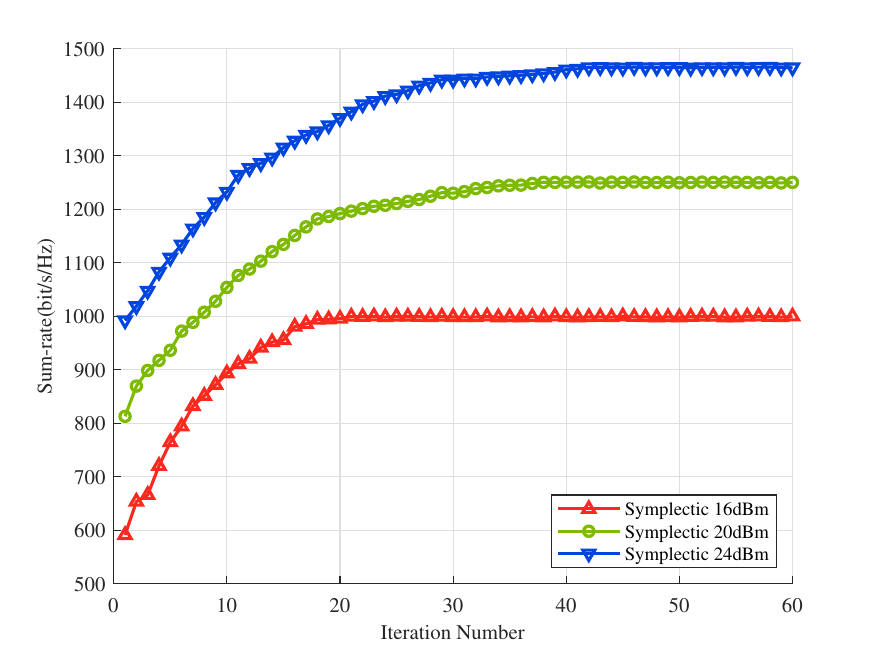}}
	\caption{Convergence behavior of the symplectic method in different power schemes.}
	\label{convergence}
\end{figure}

We then investigate the convergence behavior of the proposed symplectic method with different step sizes in the UCN massive MIMO. The results are plotted in Fig. \ref{differenth}. It can be observed that the proposed method converges in \(N = 37\) iterations with adaptive step size, whereas \(N = 52\) iterations are required when using a fixed step size of $h=0.01$, and \(N = 67\) iterations are needed for $h=0.004$.
We can also see that using an adaptive step size improves the sum-rate performance by 3.57\% compared to a fixed step size of $h=0.04$.
The simulation results demonstrate that the number of iterations required by the proposed method is reduced and the sum-rate performance can also be improved by using adaptive step size.

\begin{figure}[tbp]
	\centerline{\includegraphics[width=1\columnwidth]{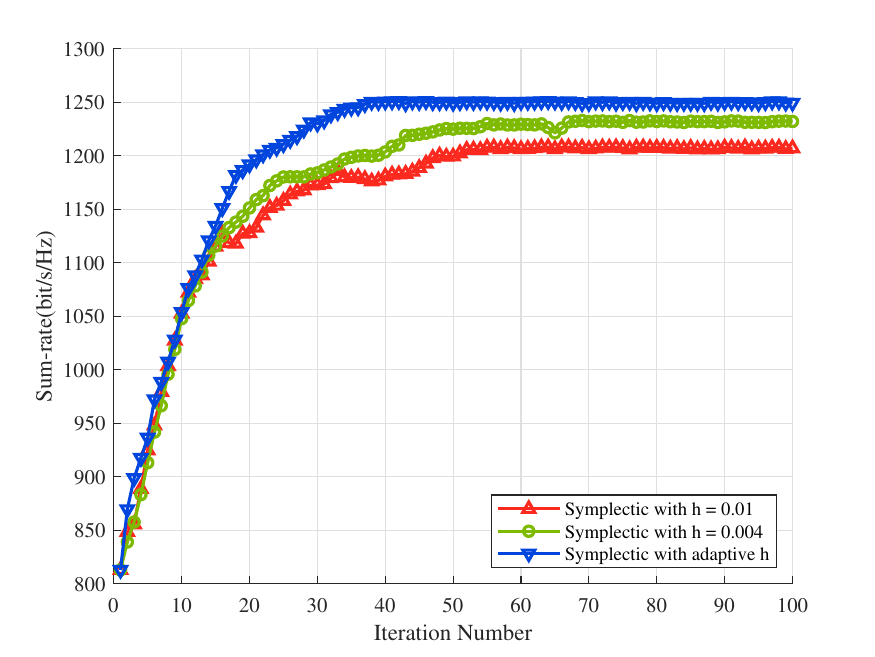}}
	\caption{Convergence behavior of the symplectic method with different step sizes.}
	\label{differenth}
\end{figure}

Fig. \ref{SwithGradient} plot the convergence comparison of the proposed symplectic optimization method, the GD method, and the NAGD method in the UCN massive MIMO system.
All three methods are initialized using the RZF precoder.
Both the NAGD and conventional GD methods determine the step size with the line search to ensure sufficient decrease in the objective function, but this comes at the cost of significantly increased computational complexity.
It can be observed that at the transmit power of $P = 16\  \mathrm{dBm}$ and $P = 24\  \mathrm{dBm}$, the proposed method converges faster than the other two methods and achieves better sum-rate performance.
%At a transmit power of $P = 16\  \mathrm{dBm}$, the proposed method converges within $N = 21$ iterations, and at $P = 24\  \mathrm{dBm}$, it converges within $N = 46$ iterations. In contrast, the other two methods require significantly more iterations to converge.
At a transmit power of $P = 24\  \mathrm{dBm}$, the proposed method achieves a sum-rate performance gain of 6.86\% over NAGD method and 20.12\% over conventional GD method at the $N = 20$ iteration.
We also have that the proposed method achieves a 3.97\% improvement in final WSR performance compared to NAGD, while the conventional GD method fails to converge even after 60 iterations at $P = 24\  \mathrm{dBm}$.
Due to the use of line search in NAGD and GD methods, which leads to high per-iteration computational complexity, the simulation results indicate that the proposed method achieves faster convergence while maintaining lower complexity.

\begin{figure}[tbp]
	\centerline{\includegraphics[width=1\columnwidth]{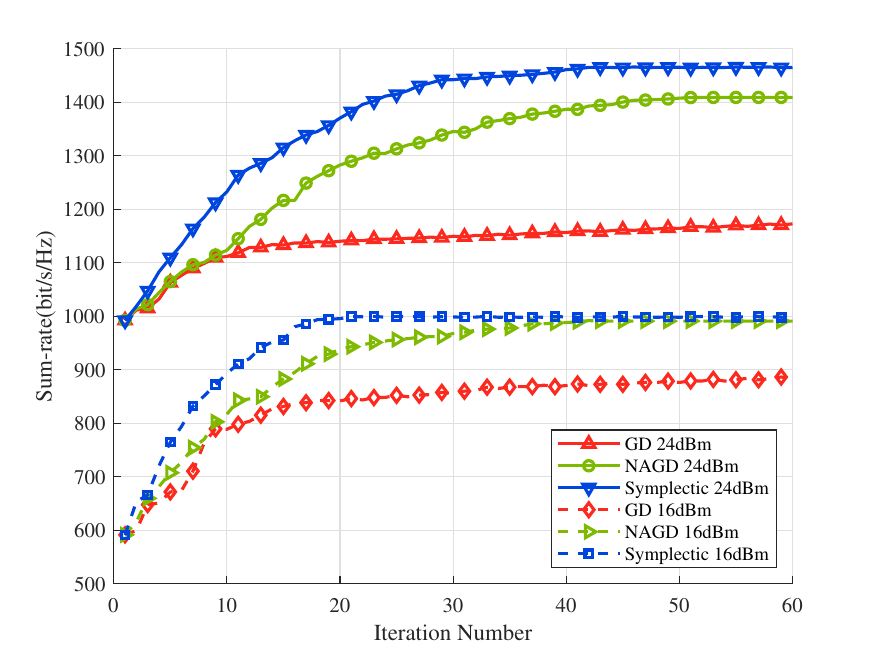}}
	\caption{Convergence comparison of the proposed symplectic optimization method, the GD method, and the NAGD method with RZF initials and two power schemes.}
	\label{SwithGradient}
\end{figure}

To further analyze the convergence performance, we compare the proposed method with the conventional GD and NAGD methods for precoder design in the considered system in Fig.~\ref{SwithGradient_0Ini}.
Both the conventional GD and NAGD methods employ line search with the Armijo condition to adjust the step size, while the proposed method adopts an adaptive step size strategy with \eqref{step} in the simulations.  
Moreover, all methods are initialized with the same randomly generated initial values to ensure a fair comparison. From Fig.~\ref{SwithGradient_0Ini}, we have that the proposed method converges faster and outperforms both GD and NAGD methods at the transmit power of $P = 16\  \mathrm{dBm}$ and $P = 24\  \mathrm{dBm}$.
The proposed method with random initialization converges in $N = 53$ iterations at $P = 24\  \mathrm{dBm}$, which is slightly slower than with RZF initialization that converges in $N = 40$ iterations.
Compared to RZF initialization, random initialization results in a 15.1\% performance degradation at $P = 24\  \mathrm{dBm}$, but completely avoids matrix inversion, thereby further reducing computational complexity as the numbers of UTs and BSs increase.
At $P = 24\  \mathrm{dBm}$, the proposed method achieves a 9.01\% performance gain over the NAGD method at convergence, and yields a 44.41\% improvement compared to the conventional GD method after $N = 100$ iterations.
In conclusion, simulation results show that the proposed precoder design converges faster and achieves better sum-rate performance compared to the other two methods.

\begin{figure}[tbp]
	\centerline{\includegraphics[width=1\columnwidth]{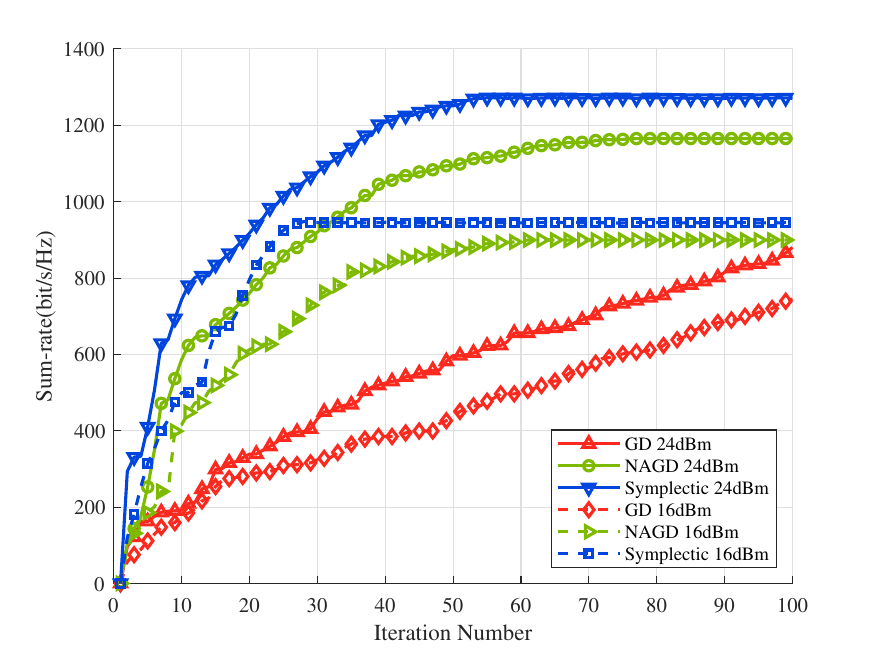}}
	\caption{Convergence comparison of the proposed symplectic optimization method, the GD method, and the NAGD method with random initials and two power schemes.}
	\label{SwithGradient_0Ini}
\end{figure}

Finally, we compare the sum-rate performance of the proposed symplectic optimization with the WMMSE precoder under different serving cluster sizes. 
The UCN massive MIMO system with $B_{\mathrm{sc}} = 1$ and $B_{\mathrm{sc}} = 21$ are viewed as a cellular and a conventional network massive MIMO system, respectively.
In Fig. \ref{cluster}, the proposed symplectic optimization method consistently outperforms the WMMSE method across all three cluster configurations. 
The symplectic optimization method with a large serving cluster size ($B_{\mathrm{sc}} = 21$) has the highest overall performance.
However, its computational complexity is much higher than that of the method with $B_{\mathrm{sc}} = 3$. 
The proposed method with \( B_{\mathrm{sc}} = 3 \) achieves an 86.6\% performance gain compared to \( B_{\mathrm{sc}} = 1 \), while increasing \( B_{\mathrm{sc}} \) further to 21 yields only an additional 28.9\% improvement. 
This indicates that most of the WSR performance benefits can be achieved with a relatively small serving cluster size, with significantly reduced computational complexity.
In addition, the symplectic optimization method with $B_{\mathrm{sc}} = 3$ surpasses the WMMSE method with $B_{\mathrm{sc}} = 21$ in the high transmit power regime, highlighting its effectiveness and computational efficiency.

\begin{figure}[tbp]
	\centerline{\includegraphics[width=1\columnwidth]{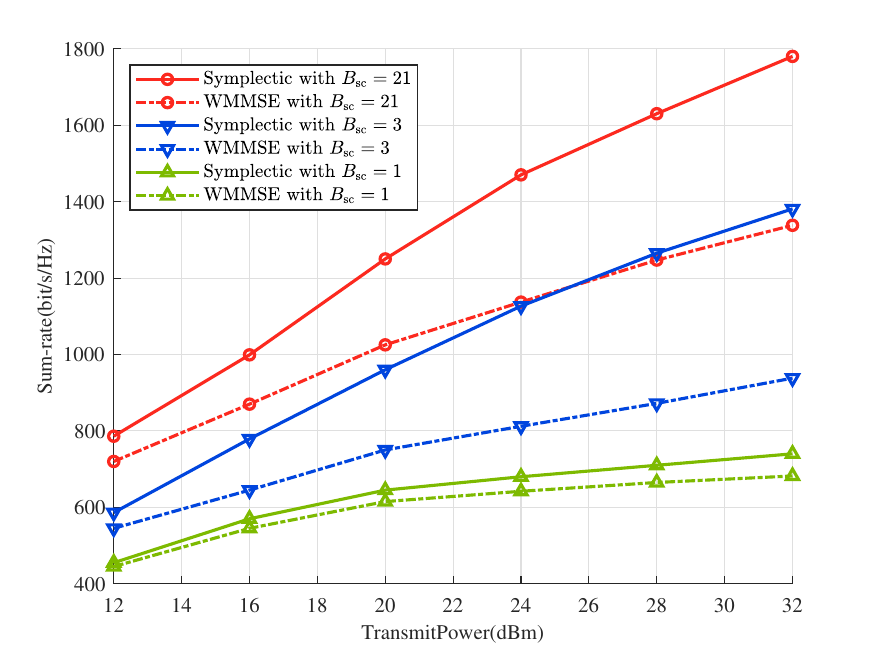}}
	\caption{The sum-rate performance comparison of the symplectic method and the WMMSE method with different sizes of serving cluster.}
	\label{cluster}
\end{figure}

\section{Conclusion}
\label{section5}
In this paper, we propose a precoder design for UCN massive MIMO system by using a symplectic optimization approach. 
In the UCN massive MIMO system, the dimensions of the precoders are reduced compared to conventional network massive MIMO system.
To further alleviate the computational burden caused by the matrix inversion, we employ symplectic optimization, which is related to a dissipative dynamical system. 
Then, we transform the received model into the real field, and derive an iterative symplectic method for the precoder design.  
The precoder is directly obtained through he iterative process.
Moreover, the complexity analysis of the proposed method is provided to demonstrates its high computational efficiency. 
The simulation results show that the proposed precoder design converges faster and achieves significant performance gains with lower complexity compared to the WMMSE precoder.

%\section{Acknowledgment}
%This work was supported by the National Natural Science Foundation of China under Grants 62394294, 62371125 and 62394290, the Jiangsu Province Major Science and Technology Project under Grant BG2024005, the Fundamental Research Funds for the Central Universities under Grant 2242022k60007, the Key R$\&$D Plan of Jiangsu Province under Grant BE2022067, and the Huawei Cooperation Project. 

\appendices
\section{Proof for Theorem \ref{thm1}}
\label{fulu1}
Let $\hat{\mathbf{q}}_{l}=[\hat{\mathbf{q}}_{l,1}^T,\hat{\mathbf{q}}_{l,2}^T,\dots,\hat{\mathbf{q}}_{l,K}^T]^T\in \mathbb{R}^{2MK\times1}$ for simplicity, and thus the momentum $\hat{\mathbf{q}}$ can be written as $\hat{\mathbf{q}}=[\hat{\mathbf{q}}_1^T,\hat{\mathbf{q}}_2^T,\dots,\hat{\mathbf{q}}_L^T]^T\in \mathbb{R}^{2MKL\times 1}$.
The formulation of \eqref{26b} can be computed as
\begin{align}
\mathbf{G}(\hat{\mathbf{p}})\hat{\mathbf{q}}&= 
\begin{bmatrix}
  \hat{\mathbf{p}}_1^T &  &  & \\
  &  \hat{\mathbf{p}}_2^T&  & \\
  &  & \ddots  & \\
  &  &  &\hat{\mathbf{p}}_L^T
\end{bmatrix}\hat{\mathbf{q}}.
\end{align}
By substituting the forms of $\hat{\mathbf{p}}_l^T$ and $\hat{\mathbf{q}}$, we then have 
\begin{align}
\mathbf{G}(\hat{\mathbf{p}})\hat{\mathbf{q}} 
&=\begin{bmatrix}
\overbrace{\hat{\mathbf{p}}_{1,1}^T ,\dots, \hat{\mathbf{p}}_{1,K}^T}^{\hat{\mathbf{p}}_{1}^T}&  & \\
 & \ddots  & \\
 &  &\overbrace{\hat{\mathbf{p}}_{L,1}^T ,\dots, \hat{\mathbf{p}}_{L,K}^T}^{\hat{\mathbf{p}}_{L}^T}
\end{bmatrix}\begin{bmatrix}
    \hat{\mathbf{q}}_{1,1}\\
    \vdots\\
    \underbrace{\hat{\mathbf{q}}_{1,K}}_{\hat{\mathbf{q}}_1}\\
    \vdots\\
    \hat{\mathbf{q}}_{L,1}\\
    \vdots\\
    \underbrace{\hat{\mathbf{q}}_{L,K}}_{\hat{\mathbf{q}}_L}\\
\end{bmatrix}\nonumber\\
&=\begin{bmatrix}
  \hat{\mathbf{p}}_1^T&  &  & \\
  &  \hat{\mathbf{p}}_2^T&  & \\
  &  & \ddots  & \\
  &  &  &\hat{\mathbf{p}}_L^T
\end{bmatrix}\begin{bmatrix}
\hat{\mathbf{q}}_1 \\
 \hat{\mathbf{q}}_2\\
\vdots \\
\hat{\mathbf{q}}_L
\end{bmatrix}\nonumber\\
%&=\mathrm{vecdiag}(\hat{\mathbf{p}}_1^T, \dots, \hat{\mathbf{p}}_L^T) \cdot \hat{\mathbf{q}}\nonumber\\
&=\begin{bmatrix}
\hat{\mathbf{p}}_1^T\hat{\mathbf{q}}_1 \\
 \hat{\mathbf{p}}_2^T\hat{\mathbf{q}}_2\\
\vdots \\
\hat{\mathbf{p}}_L^T\hat{\mathbf{q}}_L
\end{bmatrix}\in\mathbb{R}^{L\times 1}\label{Gq}.
\end{align}
Differentiating \eqref{Gq} with respect to $\hat{\mathbf{p}}$, we can obtain that
\begin{align}
    \frac{\partial}{\partial\hat{\mathbf{p}}}(\mathbf{G}(\hat{\mathbf{p}})\hat{\mathbf{q}})&=\begin{bmatrix}
    	\frac{\partial\hat{\mathbf{p}}_1^T\hat{\mathbf{q}}_1}{\partial\hat{\mathbf{p}}}\ \\
    	\frac{\partial\hat{\mathbf{p}}_2^T\hat{\mathbf{q}}_2}{\partial\hat{\mathbf{p}}}\\
    	\vdots \\
    	\frac{\partial\hat{\mathbf{p}}_L^T\hat{\mathbf{q}}_L}{\partial\hat{\mathbf{p}}}
    \end{bmatrix}\nonumber\\
    &=\begin{bmatrix}
	 \hat{\mathbf{q}}_1^T  &  &  & \\
	  & \hat{\mathbf{q}}_2^T &  & \\
	  &  & \ddots  & \\
	  &  &  &\hat{\mathbf{q}}_L^T
	\end{bmatrix}\in\mathbb{R}^{L\times  2MKL} \nonumber\\
	&= \mathrm{vecdiag}(\hat{\mathbf{q}}_1^T, \dots, \hat{\mathbf{q}}_L^T) \nonumber \\
	&\triangleq \mathbf{Q}.
\end{align}
Thus, equation (\ref{26b}) can be rewritten as
\begin{align}
	\mathbf{M}^{-1}\mathbf{Q}\dot{\hat{\mathbf{p}}}+\mathbf{G}(\hat{\mathbf{p}})\mathbf{M}^{-1}\dot{\hat{\mathbf{q}}}=0. \label{52}
\end{align}
In equation \eqref{LDE2}, we have
\begin{align}
	&\dot{\hat{\mathbf{q}}}=-H_{\mathbf{p}}-\mathbf{G}(\hat{\mathbf{p}})^T\bm{\lambda}  = -\nabla \mathbf{g}(\hat{\mathbf{p}})-\mathbf{G}(\hat{\mathbf{p}})^T\bm{\lambda}.
\end{align}
Therefore, substituting the expression of \(\dot{\hat{\mathbf{q}}}\) into equation \eqref{52}, we obtain
\begin{align}
&\mathbf{M}^{-1}\mathbf{Q}\dot{\hat{\mathbf{p}}}-\mathbf{G}(\hat{\mathbf{p}})\mathbf{M}^{-1}\Big(\nabla\mathbf{g}(\hat{\mathbf{p}})+\mathbf{G}(\hat{\mathbf{p}})^T\bm{\lambda}\Big) \nonumber \\ 
=&\mathbf{M}^{-2}\mathbf{Q}\hat{\mathbf{q}}-\mathbf{G}(\hat{\mathbf{p}})\mathbf{M}^{-1}\Big(\nabla\mathbf{g}(\hat{\mathbf{p}})+\mathbf{G}(\hat{\mathbf{p}})^T\bm{\lambda}\Big) \nonumber \\ 
=&0. \label{54}
\end{align}
Moreover, we denote 
\begin{align}
\mathbf{C}&=\mathbf{G}(\hat{\mathbf{p}})\mathbf{G}(\hat{\mathbf{p}})^T\nonumber\\
&=
\begin{bmatrix}
	\hat{\mathbf{p}}_1^T &  &  & \\
	&  \hat{\mathbf{p}}_2^T&  & \\
	&  & \ddots  & \\
	&  &  &\hat{\mathbf{p}}_L^T
\end{bmatrix} 
\begin{bmatrix}
	\hat{\mathbf{p}}_1 &  &  & \\
	&  \hat{\mathbf{p}}_2&  & \\
	&  & \ddots  & \\
	&  &  &\hat{\mathbf{p}}_L
	\end{bmatrix}\nonumber\\
%&=\mathrm{vecdiag}(\hat{\mathbf{p}}_1^T, \dots, \hat{\mathbf{p}}_L^T)   \mathrm{vecdiag}(\hat{\mathbf{p}}_1, \dots, \hat{\mathbf{p}}_L)\nonumber\\ 
&=
\begin{bmatrix}
	\hat{\mathbf{p}}_1^T\hat{\mathbf{p}}_1 &  &  & \\
	&  \hat{\mathbf{p}}_2^T\hat{\mathbf{p}}_2&  & \\
	&  & \ddots  & \\
	&  &  &\hat{\mathbf{p}}_L^T\hat{\mathbf{p}}_L
\end{bmatrix}\nonumber\\
	&=
\begin{bmatrix}
	\rho_1 &  &  & \\
	&  \rho_2&  & \\
	&  & \ddots  & \\
	&  &  &\rho_L
\end{bmatrix}\in\mathbb{R}^{L\times L} \label{C}
\end{align}
as the power constraint matrix. 
Then, equation \eqref{54} can be rewritten as
\begin{align}
	\mathbf{M}^{-2}\mathbf{Q}\hat{\mathbf{q}}-\mathbf{G}(\hat{\mathbf{p}})\mathbf{M}^{-1}\nabla\mathbf{g}(\hat{\mathbf{p}})-\mathbf{M}^{-1}\mathbf{C}\bm{\lambda}  =0.
\end{align}
Finally, by rearranging the above equation, we obtain the expression of \(\bm{\lambda}\) as
\begin{align}   \bm{\lambda}=\mathbf{C}^{-1}\Big(\mathbf{M}^{-1}\mathbf{Q}\hat{\mathbf{q}} -\mathbf{M}\mathbf{G}(\hat{\mathbf{p}})\mathbf{M}^{-1}\nabla \mathbf{g}(\hat{\mathbf{p}})\Big)\in \mathbb{R}^{L \times 1}.
\end{align}
	
\section{Proof for Theorem \ref{thm2}}
\label{fulu2}
From the constraint condition in \eqref{26a}, we have
\begin{align}
	 \mathbf{G}(\hat{\mathbf{p}}) \hat{\mathbf{q}}=\mathbf{0}.\label{fuluCon}
\end{align}
The equation \eqref{fuluCon} should be satisfied at each iteration in \eqref{diedaizong}.
%As \(\mathbf{M}\) is an identity matrix, it is omitted for simplicity.
Substituting this into the iterative update \eqref{diedai23} and left-multiplying both sides by \(\mathbf{G}(\hat{\mathbf{p}})\), we obtain
\begin{align}
&\mathbf{G}(\hat{\mathbf{p}}_{n+1}) \hat{\mathbf{q}}_{n+1}\nonumber\\
=&\mathbf{G}(\hat{\mathbf{p}}_{n+1})e^{-\gamma h/2}(\hat{\mathbf{q}}_{n+1/2}-\frac{h}{2}(\nabla \mathbf{g}(\hat{\mathbf{p}}_{n+1})+\mathbf{G}(\hat{\mathbf{p}}_{n+1})^T\bm{\mu}_n)\nonumber \\
=&\mathbf{0}.
\end{align}
Since \(e^{-\gamma h/2}\) is a scalar constant, it can be removed from both sides of the equation.
After rearranging the terms, we obtain the expression of $\bm{\mu}_n$ as 
\begin{align}
\bm{\mu}_n=\mathbf{C}^{-1}\frac{2\mathbf{G}(\hat{\mathbf{p}}_{n+1})\hat{\mathbf{q}}_{n+\frac12}-h\mathbf{G}(\hat{\mathbf{p}}_{n+1})\nabla \mathbf{g}(\hat{\mathbf{p}}_{n+1})}{h} \in \mathbb{R}^{L \times 1}
\end{align}
where matrix \(\mathbf{C}\) represents the power constraint matrix, and its expression is given in \eqref{C}.	
\bibliographystyle{IEEEtran}
\bibliography{IEEEabrv,reference}

% Generated by IEEEtran.bst, version: 1.14 (2015/08/26)
\begin{thebibliography}{10}
\providecommand{\url}[1]{#1}
\csname url@samestyle\endcsname
\providecommand{\newblock}{\relax}
\providecommand{\bibinfo}[2]{#2}
\providecommand{\BIBentrySTDinterwordspacing}{\spaceskip=0pt\relax}
\providecommand{\BIBentryALTinterwordstretchfactor}{4}
\providecommand{\BIBentryALTinterwordspacing}{\spaceskip=\fontdimen2\font plus
\BIBentryALTinterwordstretchfactor\fontdimen3\font minus
  \fontdimen4\font\relax}
\providecommand{\BIBforeignlanguage}[2]{{%
\expandafter\ifx\csname l@#1\endcsname\relax
\typeout{** WARNING: IEEEtran.bst: No hyphenation pattern has been}%
\typeout{** loaded for the language `#1'. Using the pattern for}%
\typeout{** the default language instead.}%
\else
\language=\csname l@#1\endcsname
\fi
#2}}
\providecommand{\BIBdecl}{\relax}
\BIBdecl

\bibitem{jiang2021road}
W.~Jiang, B.~Han, M.~A. Habibi, and H.~D. Schotten, ``{The road towards 6G: A
  comprehensive survey},'' \emph{IEEE Open J. Commun. Soc.}, vol.~2, pp.
  334--366, 2021.

\bibitem{wang2023road}
C.-X. Wang, X.~You, X.~Gao, X.~Zhu, Z.~Li, C.~Zhang, H.~Wang, Y.~Huang,
  Y.~Chen, H.~Haas \emph{et~al.}, ``{On the road to 6G: Visions, requirements,
  key technologies, and testbeds},'' \emph{IEEE Commun. Surv. Tutor.}, vol.~25,
  no.~2, pp. 905--974, 2023.

\bibitem{lu2014overview}
L.~Lu, G.~Y. Li, A.~L. Swindlehurst, A.~Ashikhmin, and R.~Zhang, ``{An overview
  of massive MIMO: Benefits and challenges},'' \emph{IEEE J. Sel. Top. Sign.
  Proces.}, vol.~8, no.~5, pp. 742--758, 2014.

\bibitem{larsson2014massive}
E.~G. Larsson, O.~Edfors, F.~Tufvesson, and T.~L. Marzetta, ``{Massive MIMO for
  next generation wireless systems},'' \emph{IEEE Commun. Mag.}, vol.~52,
  no.~2, pp. 186--195, 2014.

\bibitem{huang2011mimo}
H.~Huang, C.~B. Papadias, and S.~Venkatesan, \emph{{MIMO communication for
  cellular networks}}.\hskip 1em plus 0.5em minus 0.4em\relax Springer Science
  \& Business Media, 2011.

\bibitem{you2022energy}
L.~You, J.~Xu, G.~C. Alexandropoulos, J.~Wang, W.~Wang, and X.~Gao, ``{Energy
  efficiency maximization of massive MIMO communications with dynamic
  metasurface antennas},'' \emph{IEEE Trans. Wireless Commun.}, vol.~22, no.~1,
  pp. 393--407, 2022.

\bibitem{1678166}
M.~Karakayali, G.~Foschini, and R.~Valenzuela, ``{Network coordination for
  spectrally efficient communications in cellular systems},'' \emph{IEEE
  Wireless Commun.}, vol.~13, no.~4, pp. 56--61, 2006.

\bibitem{5595728}
T.~L. Marzetta, ``{Noncooperative Cellular Wireless with Unlimited Numbers of
  Base Station Antennas},'' \emph{IEEE Trans. Wireless Commun.}, vol.~9,
  no.~11, pp. 3590--3600, 2010.

\bibitem{rahman2010enhancing}
M.~Rahman and H.~Yanikomeroglu, ``{Enhancing cell-edge performance: a downlink
  dynamic interference avoidance scheme with inter-cell coordination},''
  \emph{IEEE Trans. Wireless Commun.}, vol.~9, no.~4, pp. 1414--1425, 2010.

\bibitem{elhoushy2021cell}
S.~Elhoushy, M.~Ibrahim, and W.~Hamouda, ``{Cell-free massive MIMO: A
  survey},'' \emph{IEEE Commun. Surveys Tuts.}, vol.~24, no.~1, pp. 492--523,
  2021.

\bibitem{zhang2019cell}
J.~Zhang, S.~Chen, Y.~Lin, J.~Zheng, B.~Ai, and L.~Hanzo, ``{Cell-free massive
  MIMO: A new next-generation paradigm},'' \emph{IEEE Access}, vol.~7, pp.
  99\,878--99\,888, 2019.

\bibitem{obakhena2021application}
H.~I. Obakhena, A.~L. Imoize, F.~I. Anyasi, and K.~Kavitha, ``{Application of
  cell-free massive MIMO in 5G and beyond 5G wireless networks: A survey},''
  \emph{J. Eng. Appl. Sci}, vol.~68, no.~1, p.~13, 2021.

\bibitem{interdonato2020local}
G.~Interdonato, M.~Karlsson, E.~Bj{\"o}rnson, and E.~G. Larsson, ``{Local
  partial zero-forcing precoding for cell-free massive MIMO},'' \emph{IEEE
  Trans. Wireless Commun.}, vol.~19, no.~7, pp. 4758--4774, 2020.

\bibitem{venkatesan2007network}
S.~Venkatesan, A.~Lozano, and R.~Valenzuela, ``{Network MIMO: Overcoming
  intercell interference in indoor wireless systems},'' in \emph{Proc. Asilomar
  Conf. Signals Syst. Comput.}, 2007, pp. 83--87.

\bibitem{lee2012network}
C.~Lee, C.-B. Chae, T.~Kim, S.~Choi, and J.~Lee, ``{Network massive MIMO for
  cell-boundary users: From a precoding normalization perspective},'' in
  \emph{Proc. IEEE Globecom Workshops (GC Workshops)}, Anaheim, CA, Dec. 2012,
  pp. 233--237.

\bibitem{ammar2021user}
H.~A. Ammar, R.~Adve, S.~Shahbazpanahi, G.~Boudreau, and K.~V. Srinivas,
  ``{User-centric cell-free massive MIMO networks: A survey of opportunities,
  challenges and solutions},'' \emph{IEEE Commun. Surv. Tutor.}, vol.~24,
  no.~1, pp. 611--652, 2021.

\bibitem{demir2021foundations}
{\"O}.~T. Demir, E.~Bj{\"o}rnson, L.~Sanguinetti \emph{et~al.}, ``{Foundations
  of user-centric cell-free massive MIMO},'' \emph{FNT in Signal Processing},
  vol.~14, no. 3-4, pp. 162--472, 2021.

\bibitem{sun2025precoder}
R.~Sun, L.~You, A.-A. Lu, C.~Sun, X.~Gao, and X.-G. Xia, ``{Precoder Design for
  User-Centric Network Massive MIMO with Matrix Manifold Optimization},''
  \emph{IEEE J. Sel. Areas Commun.}, 2025.

\bibitem{chen2023user}
S.~Chen, L.~Chen, B.~Hu, S.~Sun, Y.~Wang, H.~Wang, and W.~Gao, ``{User-centric
  access network (UCAN) for 6G: motivation, concept, challenges, and key
  technologies},'' \emph{IEEE Network}, vol.~38, no.~3, pp. 154--162, 2023.

\bibitem{albreem2021overview}
M.~A. Albreem, A.~H. Al~Habbash, A.~M. Abu-Hudrouss, and S.~S. Ikki,
  ``{Overview of precoding techniques for massive MIMO},'' \emph{IEEE Access},
  vol.~9, pp. 60\,764--60\,801, 2021.

\bibitem{lee2007high}
J.~Lee and N.~Jindal, ``{High SNR analysis for MIMO broadcast channels: Dirty
  paper coding versus linear precoding},'' \emph{IEEE Trans. Inf. Theory},
  vol.~53, no.~12, pp. 4787--4792, 2007.

\bibitem{peel2005vector}
C.~B. Peel, B.~M. Hochwald, and A.~L. Swindlehurst, ``{A vector-perturbation
  technique for near-capacity multiantenna multiuser communication-part I:
  channel inversion and regularization},'' \emph{IEEE Trans. Wireless Commun.},
  vol.~53, no.~1, pp. 195--202, 2005.

\bibitem{sadek2007leakage}
M.~Sadek, A.~Tarighat, and A.~H. Sayed, ``{A leakage-based precoding scheme for
  downlink multi-user MIMO channels},'' \emph{IEEE Trans. Wireless Commun.},
  vol.~6, no.~5, pp. 1711--1721, 2007.

\bibitem{christensen2008weighted}
S.~S. Christensen, R.~Agarwal, E.~De~Carvalho, and J.~M. Cioffi, ``{Weighted
  sum-rate maximization using weighted MMSE for MIMO-BC beamforming design},''
  \emph{IEEE Trans. Wireless Commun.}, vol.~7, no.~12, pp. 4792--4799, 2008.

\bibitem{shi2011iteratively}
Q.~Shi, M.~Razaviyayn, Z.-Q. Luo, and C.~He, ``{An iteratively weighted MMSE
  approach to distributed sum-utility maximization for a MIMO interfering
  broadcast channel},'' \emph{IEEE Trans. Signal Process.}, vol.~59, no.~9, pp.
  4331--4340, 2011.

\bibitem{shi2023robust}
J.~Shi, A.-A. Lu, W.~Zhong, X.~Gao, and G.~Y. Li, ``{Robust WMMSE precoder with
  deep learning design for massive MIMO},'' \emph{IEEE Trans. Wireless
  Commun.}, vol.~71, no.~7, pp. 3963--3976, 2023.

\bibitem{ji2009accelerated}
S.~Ji and J.~Ye, ``{An accelerated gradient method for trace norm
  minimization},'' in \emph{Proc. 26th annual international conference on
  machine learning (ICML)}, Montreal, Canada, 2009, pp. 457--464.

\bibitem{betancourt2018symplectic}
M.~Betancourt, M.~I. Jordan, and A.~C. Wilson, ``{On symplectic
  optimization},'' \emph{arXiv preprint arXiv:1802.03653}, 2018.

\bibitem{ghirardelli2023optimization}
M.~Ghirardelli, ``Optimization via conformal {Hamiltonian} systems on
  manifolds,'' \emph{Journal of Computational Dynamics}, vol.~11, no.~3, pp.
  354--375, 2024.

\bibitem{zhang2025cross}
Y.~Zhang, A.-A. Lu, X.-G. Xia, and X.~Gao, ``{Cross-subcarrier precoder design
  for massive MIMO-OFDM downlink with symplectic optimization},'' \emph{Sci.
  China Inf. Sci}, vol.~68, no.~2, p. 122301, 2025.

\bibitem{zhang2025symplecticoptimizationcrosssubcarrier}
Y.~Zhang, A.-A. Lu, and X.~Gao, ``{Symplectic Optimization for Cross Subcarrier
  Precoder Design with Channel Smoothing in Massive MIMO-OFDM System},''
  \emph{arXiv preprint arXiv:2503.07090}, 2025.

\bibitem{jordan2018dynamical}
M.~I. Jordan, ``{Dynamical, symplectic and stochastic perspectives on
  gradient-based optimization},'' in \emph{Proc. International Congress of
  Mathematicians: Rio de Janeiro 2018}.\hskip 1em plus 0.5em minus 0.4em\relax
  World Scientific, pp. 523--549.

\bibitem{Lin2025}
P.~Lin, A.-A. Lu, and X.~Gao, ``{Robust Precoder Design for User-Centric
  Network Massive MIMO with Symplectic Optimization},'' 2025, submitted to IEEE
  WCL.

\bibitem{nesterov1983method}
Y.~Nesterov, ``{A method for unconstrained convex minimization problem with the
  rate of convergence O (1/k2)},'' in \emph{Dokl. Akad. Nauk. SSSR}, vol. 269,
  no.~3, 1983, p. 543.

\bibitem{liu2022convergence}
X.~Liu, W.~Tao, and Z.~Pan, ``{A convergence analysis of Nesterov’s
  accelerated gradient method in training deep linear neural networks},''
  \emph{Information Sciences}, vol. 612, pp. 898--925, 2022.

\bibitem{sutskever2013importance}
I.~Sutskever, J.~Martens, G.~Dahl, and G.~Hinton, ``{On the importance of
  initialization and momentum in deep learning},'' in \emph{Proc. 30th
  international conference on machine learning (ICML)}.\hskip 1em plus 0.5em
  minus 0.4em\relax Atlanta, USA: PMLR, 2013, pp. 1139--1147.

\bibitem{miao2025serving}
X.~Miao, Y.~Zhang, L.~Liu, and Z.~Zhang, ``{Serving Cluster Design and Hybrid
  Precoding for Cell-Free-Assisted LEO Satellite Communications with Nonlinear
  Power Amplifiers},'' \emph{Electronics}, vol.~14, no.~7, p. 1317, 2025.

\bibitem{nocedal1999numerical}
J.~Nocedal and S.~J. Wright, \emph{{Numerical optimization}}.\hskip 1em plus
  0.5em minus 0.4em\relax Springer, 1999.

\bibitem{sun2006optimization}
W.~Sun and Y.-X. Yuan, \emph{{Optimization theory and methods: nonlinear
  programming}}.\hskip 1em plus 0.5em minus 0.4em\relax Springer Science \&
  Business Media, 2006, vol.~1.

\bibitem{harier2000geometric}
E.~Harier, C.~Lubich, and G.~Wanner, ``{Geometric numerical integration},''
  \emph{Structure-Preserving Algorithms for Ordinary}, 2000.

\bibitem{liu2013comparisons}
T.-H. Liu, ``{Comparisons of two real-valued MIMO signal models and their
  associated ZF-SIC detectors over the Rayleigh fading channel},'' \emph{IEEE
  Trans. Wireless Commun.}, vol.~12, no.~12, pp. 6054--6066, 2013.

\bibitem{mclachlan2001conformal}
R.~McLachlan and M.~Perlmutter, ``Conformal {Hamiltonian} systems,''
  \emph{Journal of Geometry and Physics}, vol.~39, no.~4, pp. 276--300, 2001.

\bibitem{boyd2004convex}
S.~P. Boyd and L.~Vandenberghe, \emph{{Convex optimization}}.\hskip 1em plus
  0.5em minus 0.4em\relax Cambridge University Press, 2004.

\bibitem{armijo1966minimization}
L.~Armijo, ``{Minimization of functions having Lipschitz continuous first
  partial derivatives},'' \emph{Pac. J. Math.}, vol.~16, no.~1, pp. 1--3, 1966.

\bibitem{wolfe1969convergence}
P.~Wolfe, ``{Convergence conditions for ascent methods},'' \emph{SIAM Rev.},
  vol.~11, no.~2, pp. 226--235, 1969.

\bibitem{wadia2021optimization}
N.~S. Wadia, M.~I. Jordan, and M.~Muehlebach, ``{Optimization with adaptive
  step size selection from a dynamical systems perspective},'' in \emph{Proc.
  35th Conference on Neural Information Processing Systems (NeurIPS), Workshop
  on Optimization for Machine Learning}, 2021.

\bibitem{wu2018precoder}
Z.~Wu and Z.~Fei, ``{Precoder design in downlink CoMP-JT MIMO network via WMMSE
  and asynchronous ADMM},'' \emph{Sci. China Inf. Sci}, vol.~61, pp. 1--13,
  2018.

\bibitem{jaeckel2014quadriga}
S.~Jaeckel, L.~Raschkowski, K.~B{\"o}rner, and L.~Thiele, ``{QuaDRiGa: A 3-D
  multi-cell channel model with time evolution for enabling virtual field
  trials},'' \emph{IEEE Trans. Antennas Propag.}, vol.~62, no.~6, pp.
  3242--3256, 2014.

\bibitem{3gpp.38.300}
{3GPP}, ``{NR and NG-RAN overall description},'' {3rd Generation Partnership
  Project (3GPP)}, {Technical Specification (TS)} 38.300, 10 2023, version
  17.6.0.

\bibitem{3gpp.38.104}
------, ``{Base Station (BS) radio transmission and reception},'' {3rd
  Generation Partnership Project (3GPP)}, {Technical Specification (TS)}
  38.104, 07 2023, version 17.10.0.

\end{thebibliography}
\vspace{12pt}
	
\end{document}